\newtheorem{theorem}{Theorem}
\newtheorem{corollary}{Corollary}[theorem]
\DeclareMathOperator{\E}{\mathbb{E}}
\DeclareMathOperator{\F}{\mathcal{F}}
\DeclareMathOperator{\Fi}{\mathcal{F}^{\;-1}}
\DeclareMathOperator*{\argmax}{arg\,max}
\DeclareMathOperator{\Var}{Var}
\DeclareRobustCommand{\pdot}{\mathbin{\mathpalette\pdot@\relax}}
\newcommand{\pdot@}[2]{%
  \ooalign{%
    $\m@th#1\odot$\cr
    \hidewidth$\m@th#1\cdot$\hidewidth\cr
  }%
}
\let\oldeqref\eqref
\renewcommand{\eqref}[1]{Eq.~\oldeqref{#1}}
\begin{document}

\title{Snapshot Synthetic Aperture Imaging\\with Boiling Speckle}

\author{
    Janith~B.~Senanayaka\textsuperscript{\orcidlink{0000-0003-1375-535X}}~and~Christopher~A.~Metzler\textsuperscript{\orcidlink{0000-0001-6827-7207}}
\thanks{This work was supported in part by ONR award no.~N00014-23-1-2752 and NSF CAREER Award no.~CCF2339616.}
\thanks{Janith B. Senanayaka is with the Department
of Electrical and Computer Engineering, University of Maryland, College Park, MD, 20740 USA e-mail: janith@umd.edu. Christopher A. Metzler is with the Department of Computer Science, University of Maryland, College Park, MD, 20740 USA e-mail: metzler@umd.edu.}
\thanks{This work has been submitted to the IEEE for possible publication. Copyright may be transferred without notice, after which this version may no longer be accessible.}}

\maketitle

\begin{abstract}
Light-based synthetic aperture (SA) imaging methods, such as Fourier Ptychography, have brought breakthrough high-resolution wide-field-of-view imaging capabilities to microscopy. 
While these technologies promise similar improvements in long-range imaging applications, macroscale light-based SA imaging is significantly more challenging. 
In this work, we first demonstrate that speckle noise is particularly problematic for light-based SA systems.
Specifically, we prove that it is fundamentally impossible to perform SA imaging of fully diffuse scenes if one captures sequential measurements that suffer from per-measurement-independent speckle. 
We then develop a snapshot SA imaging method and aperture-phase-synchronization strategy that can overcome this limitation and enable SA imaging. 
Remarkably, we further demonstrate, in simulation, that \emph{speckle can be exploited to recover missing spatial frequency information in SA imaging systems with distributed, non-overlapping apertures}. 
That is, one can use speckle to improve the resolution of an SA imaging system.
\end{abstract}

\begin{IEEEkeywords}
Synthetic aperture imaging, distributed apertures, speckle, phase synchronization, longe-range imaging, Fourier Ptychography, snapshot imaging.
\end{IEEEkeywords}

%
\IEEEpeerreviewmaketitle

\section{Introduction}
\IEEEPARstart{S}{ynthetic} aperture (SA) imaging systems synthesize high-resolution wide-field-of-view images by computationally integrating a sequence of low-resolution measurements captured with distinct illumination angles or sensor/aperture positions~\cite{gustafsson_surpassing_2000, zheng_wide-field_2013, holloway_toward_2016}. 
While SA imaging was originally introduced for radar and sonar~\cite{noauthor_synthetic_nodate}, it has since been translated into the optical domain~\cite{tippie_high-resolution_2011,zheng_wide-field_2013}.
Today light-based SA imaging techniques such as Fourier Ptychography (FP) have become a mainstay in the computational microscopy toolbox~\cite{konda_fourier_2020, zheng_concept_2021}. 
While light-based SA shows promise for long range imaging applications as well~\cite{dong_aperture-scanning_2014, holloway_savi_2017, li_far-field_2023, zhang_200_2024}, it remains under-utilized in that domain---we attribute this gap to speckle noise. 

Unlike in microscopy, in long-range SA applications, many targets of interest are likely to be optically rough, and thus introduce speckle noise.
Moreover, these targets are unlikely to be perfectly stationary; this speckle noise will change over time. That is, it will ``boil''.

\begin{figure}[t]
  \centering
  \subfloat[Ground truth]{\includegraphics[width=0.33\linewidth]{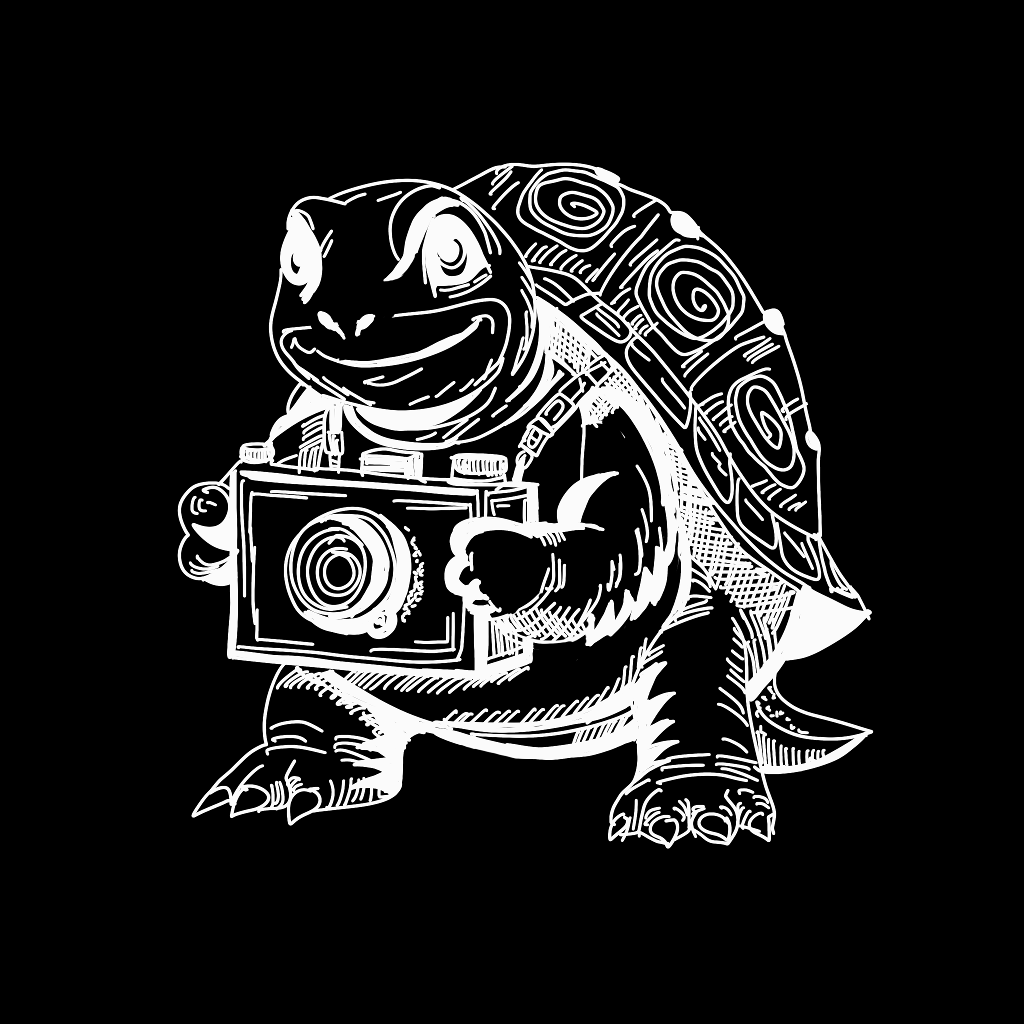}}
  \hfill
  \subfloat[Physical aperture]{\includegraphics[width=0.33\linewidth]{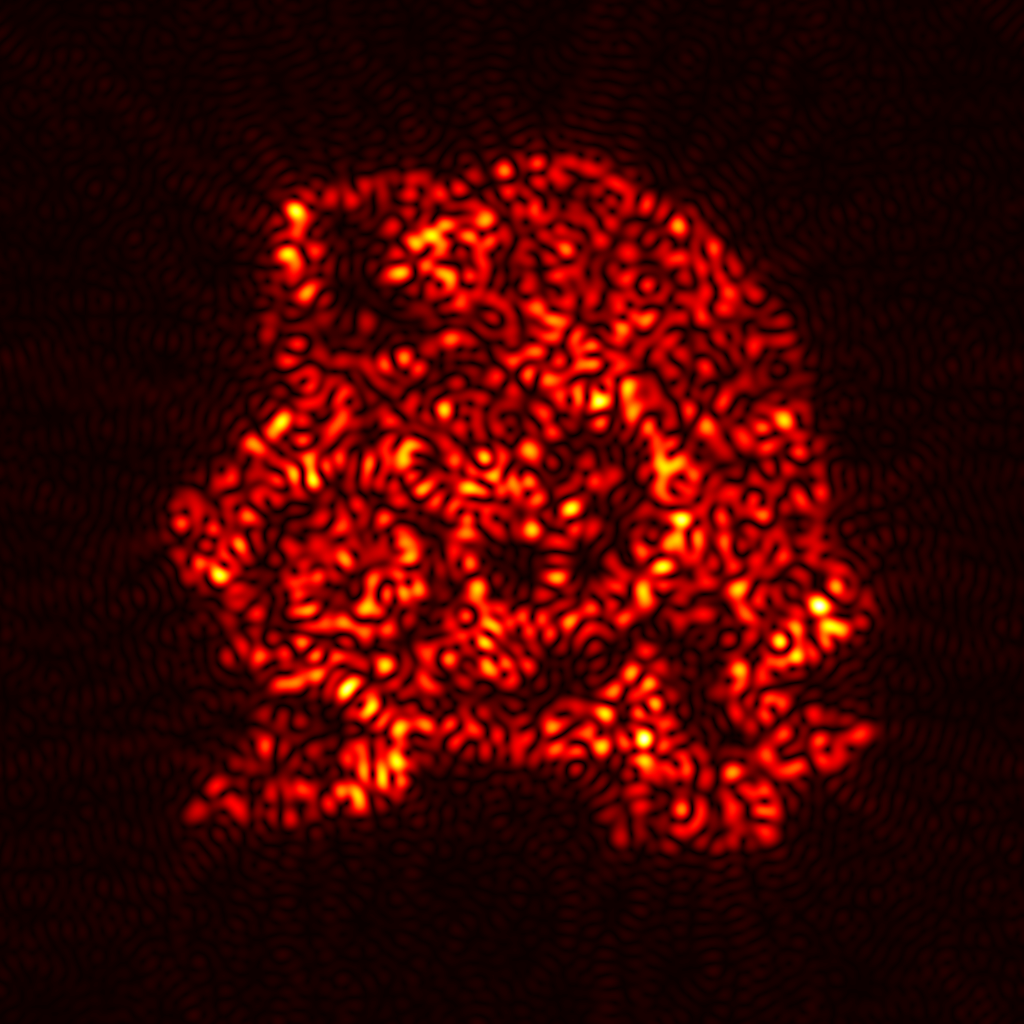}} 
  \hfill
  \subfloat[Synthetic aperture]{\includegraphics[width=0.33\linewidth]{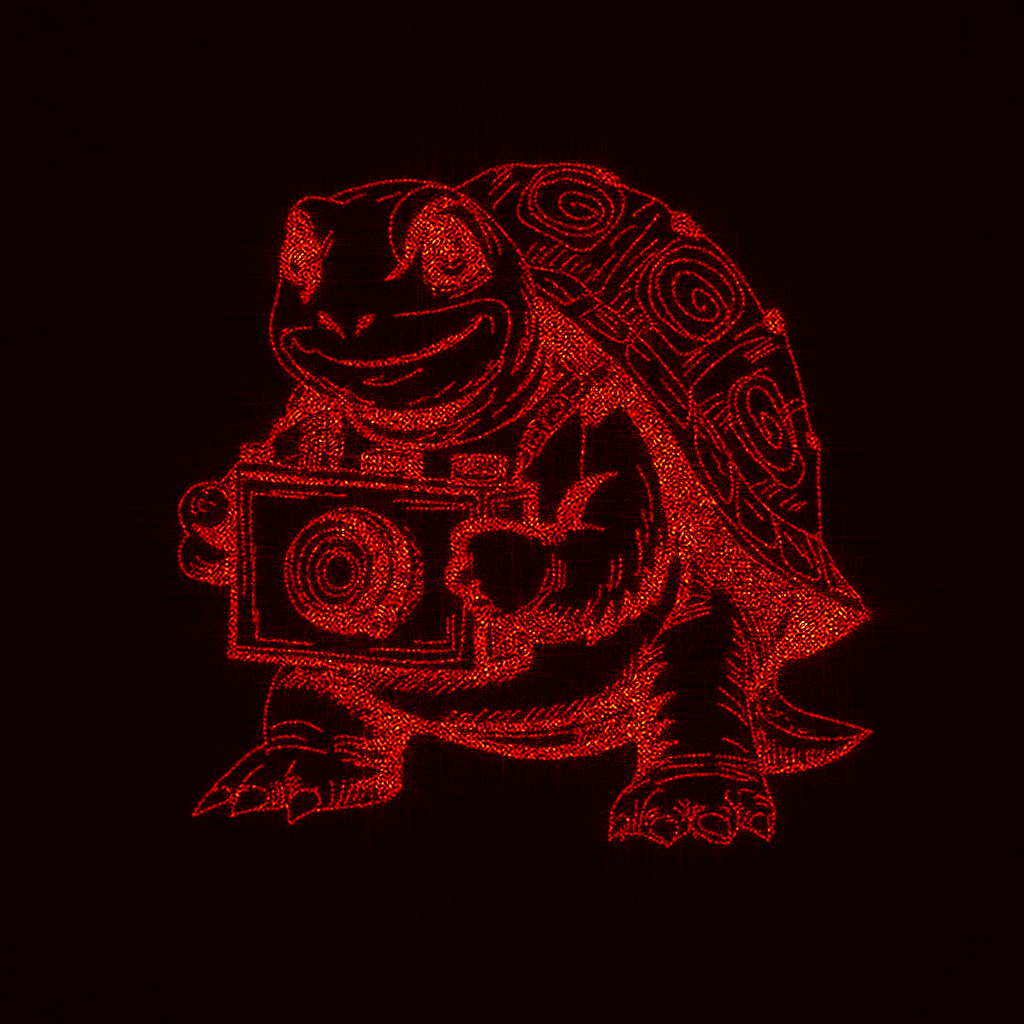}}  
  \caption{\textbf{Resolution improvement.} 
  Our SA imaging system can mitigate and exploit speckle noise to resolve fine, high-frequency details using measurements captured by a distributed array of sensors.
  }
  \label{fig:testudo}
\end{figure}

In this paper, we analyze SA imaging with boiling speckle. 
We prove that, with per-measurement-independent speckle and fully diffuse scenes, the distribution of far-field measurements is invariant to the aperture position.  
Consequently, \emph{light-based SA imaging is fundamentally impossible for fully diffuse scenes under per-measurement-independent speckle}. 

One can avoid independent speckle realizations by capturing measurements simultaneously in a snapshot with a sensor array~\cite{wang2022snapshot}. 
However, it is generally challenging, though not impossible~\cite{wang_metalens-based_2024}, to construct a snapshot imaging array with overlapping apertures. Without overlapping apertures, each subaperture image is subject to its own unknown global phase offset---one must compensate for this phase offset in order to perform SA imaging. 
Inspired by recent advances in wavefront shaping~\cite{yeminy_guidestar-free_2021, haim_image-guided_2025}, we propose a novel computational  aperture-phase-synchronization technique to estimate these relative phase offsets. 

While the proposed snapshot strategy improves resolution, the reconstructed images are still severely degraded by speckle noise. 
To mitigate speckle noise, we introduce a novel speckle averaging strategy.
We show this speckle averaging strategy both removes noise and \emph{recovers spectral information that lies between far-field subapertures}---and would not otherwise be recorded. 

In summary, our contributions are as follows:
\begin{itemize}
    \item We prove that boiling speckle makes SA imaging with sequentially captured measurements impossible.
    \item We introduce an aperture-phase-synchronization technique to enable snapshot SA imaging with boiling speckle and distributed, non-overlapping apertures.
    \item We demonstrate that speckle can be exploited to enable hallucination-free SA imaging with non-overlapping apertures.  
\end{itemize}

This manuscript significantly expands on a preliminary conference submission~\cite{11091570}. 
The conference submission contained only the translation invariance and impossibility result in~\Cref{sec:impossibility}. 
The snapshot imaging framework, aperture-phase-synchronization strategy, and speckle averaging strategy introduced in this manuscript have not appeared elsewhere.

\subsection*{What About Existing Long-Range SA Imaging Systems?}

Our central theoretical result suggests that sequentially captured long-range light-based SA imaging is impossible---which seems to be rebutted by several successful long-range light-based SA imaging demonstrations~\cite{holloway_savi_2017, li_far-field_2023, zhang_200_2024}. 
We attribute this mismatch between theory and practice to two pessimistic assumptions we make in our analysis: (a) that speckle fully boils (is independent) measurement-to-measurement and (b) that the objects of interest are fully diffuse with no specular reflections. 
If either assumption is broken, reconstruction with sequential measurements is still possible to some degree.

\section{Problem Definition and Paper Outline}\label{sec:problem-definition}

\begin{figure}[t]
    \centering
    \includegraphics[width=0.5\textwidth]{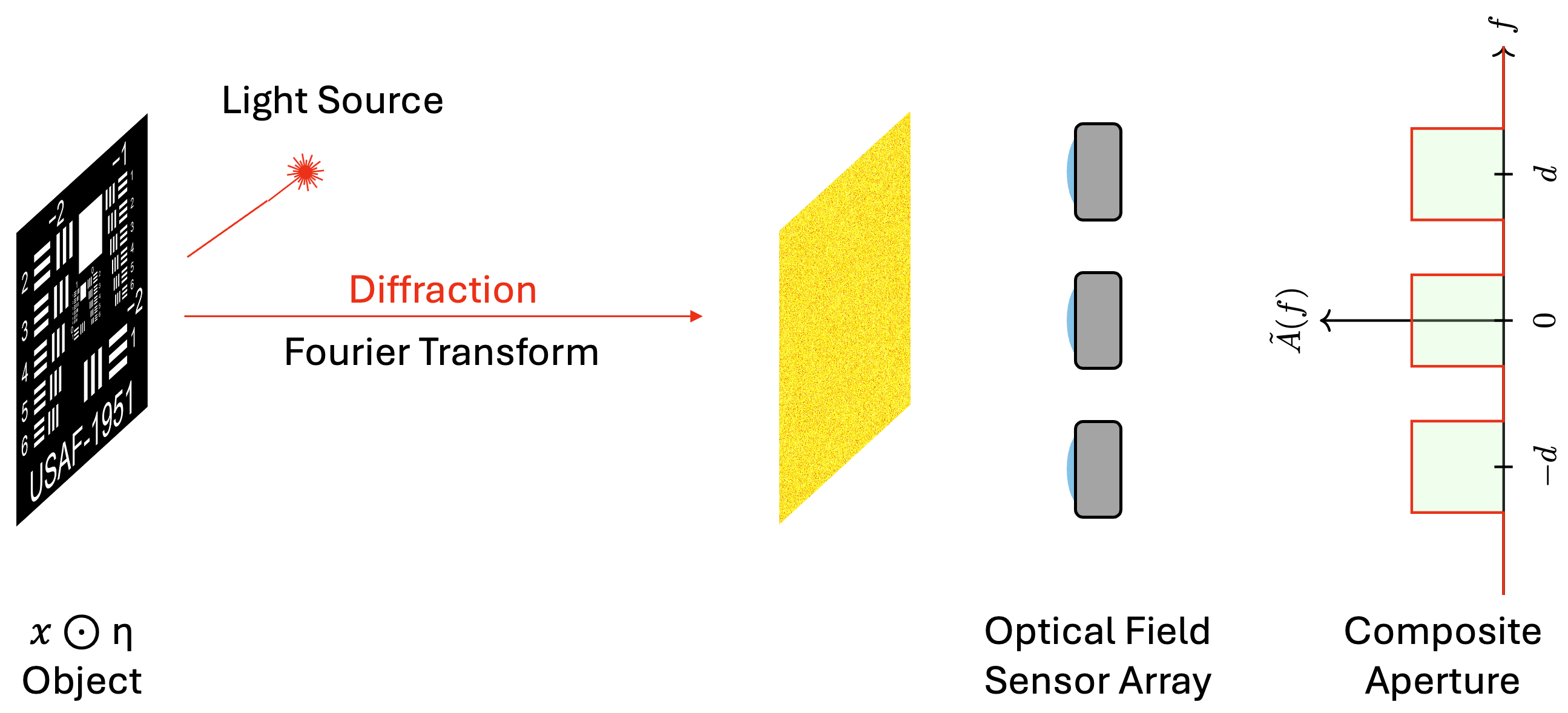}
    \caption{\textbf{Imaging setup.} The sensor array captures far-field (Fourier-plane) optical field measurements of the scene. The physical location of the field sensors determines the system's passband.
    }
    \label{fig:three-camera-setup}
\end{figure}

We consider a distributed aperture SA imaging system tasked with recovering the albedo $|x|^2$ of a coherently illuminated, fully Lambertian scene from a collection of non-overlapping diffraction-limited far-field measurements. 
We assume that each sensor records the complex-valued optical field at the aperture using holography or coded wavefront sensing~\cite{wu_wish_2019}. 
Thus we record a set of complex fields $U_{\ell,s}$ described by
\begin{equation}
    U_{\ell,s} =A_\ell \odot \F(x \odot \eta_s)e^{j{\phi}_{\ell,s}},
    \label{eq:problem-definition}
\end{equation}
where $\F$ is the 2D Fourier transformation operator (Fraunhofer propagation), $\odot$ denotes the Hadamard product (or element-wise product), $A_\ell$ denotes a binary mask describing the pupil function of the $\ell$\textsuperscript{th} aperture,  ${\phi}_{\ell,s}$ is an unknown phase offset associated with the measurement, and $\eta_s$ represents the $s$\textsuperscript{th} realization of circular Gaussian distributed speckle noise. 
Phase offsets ${\phi}_{\ell,s}$ are all but inevitable at optical frequencies---nm-scale manufacturing tolerances would be required to avoid them. 
Speckle noise is the result of the surface roughness of Lambertian scenes; it causes coherent light scattered off their surfaces to follow a circular Gaussian distribution~\cite{goodman_speckle_2020}. 

Such an imaging system is illustrated in Figure~\ref{fig:three-camera-setup}. 
The intensities of the Fourier transforms of measurements captured by each of such a system's apertures---i.e., image plane measurements---are illustrated in Figure~\ref{fig:comparison}. 
These images are equivalent to what one would observe with a Fourier pytchography imaging system.
As in Fourier ptychography, the position of the apertures determines the system's passband.

\subsection{Structure of the Paper}
This paper characterizes and improves the performance of such an imaging system. 
The paper is structured as follows:
In Section~\ref{sec:impossibility}, we will prove measurements following~\eqref{eq:problem-definition} are translation invariant and therefore SA imaging with sequentially captured measurements is impossible. 
Next in Section~\ref{sec:snapshot-imaging} we will show how translation invariance can be overcome by capturing the aperture measurements simultaneously. 
Moreover, in Section~\ref{sec:equalized-speckle-averaging} we will show that speckle can be exploited to improve the resolution of this system and in Section~\ref{sec:synchronization} we will introduce a variance-maximization strategy for synchronizing the phase offsets between the non-overlapping apertures' measurements. 
Section~\ref{sec:simulations} validates the proposed method in simulation.

\begin{figure}[t]
    \centering
    \renewcommand{\arraystretch}{0.3}
    \begin{tabular}{@{}m{0.05\linewidth}@{\hskip 0.005\linewidth}m{0.23\linewidth}@{\hskip 0.005\linewidth}m{0.23\linewidth}@{\hskip 0.005\linewidth}m{0.23\linewidth}@{\hskip 0.005\linewidth}m{0.23\linewidth}@{}}
        \rotatebox{90}{\small Aperture} &
        \includegraphics[width=\linewidth]{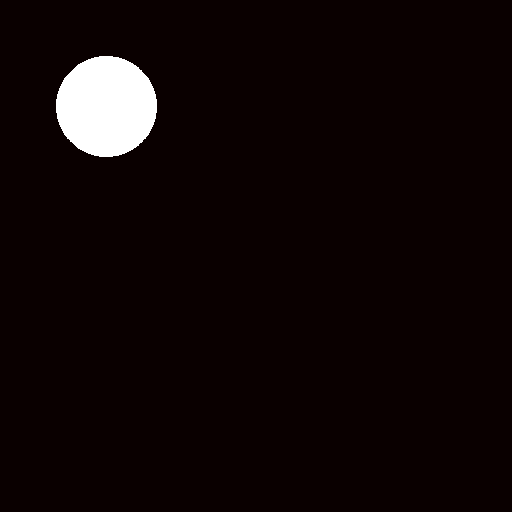} &
        \includegraphics[width=\linewidth]{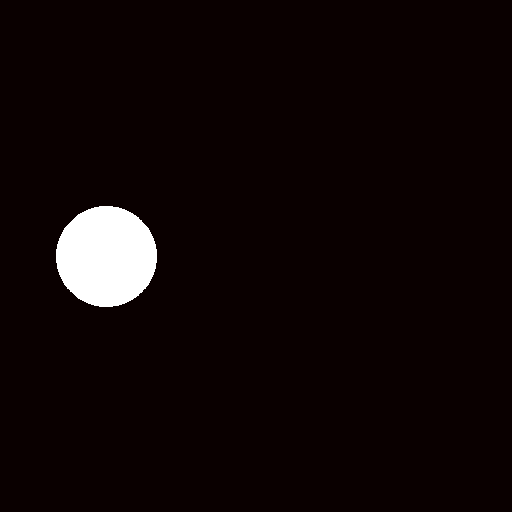} &
        \includegraphics[width=\linewidth]{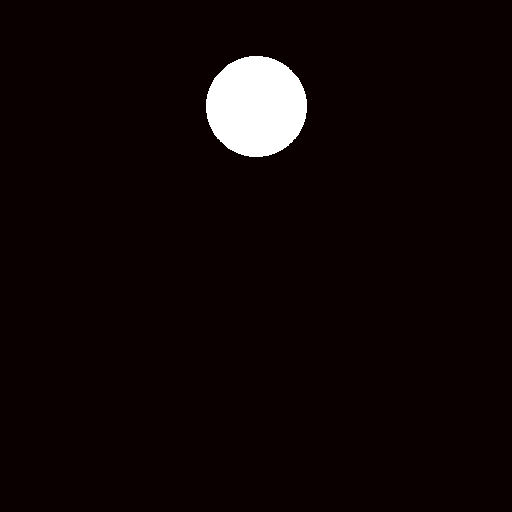} &
        \includegraphics[width=\linewidth]{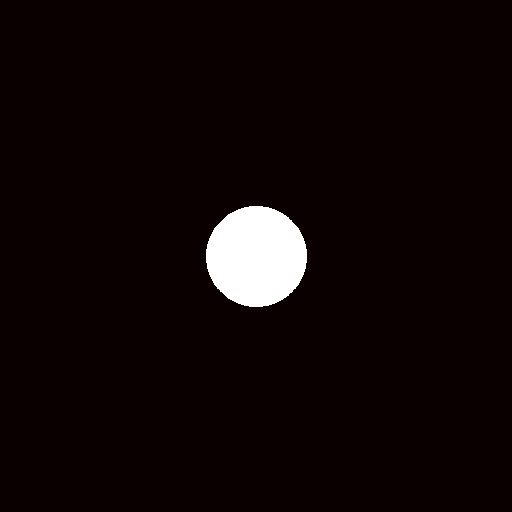} \\
    
        \rotatebox{90}{\small Specular} &
        \includegraphics[width=\linewidth]{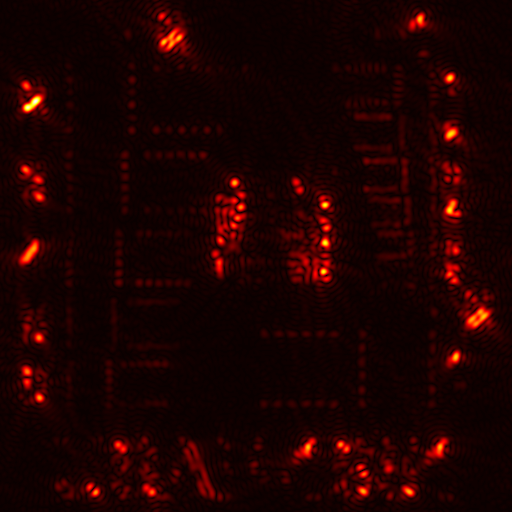} &
        \includegraphics[width=\linewidth]{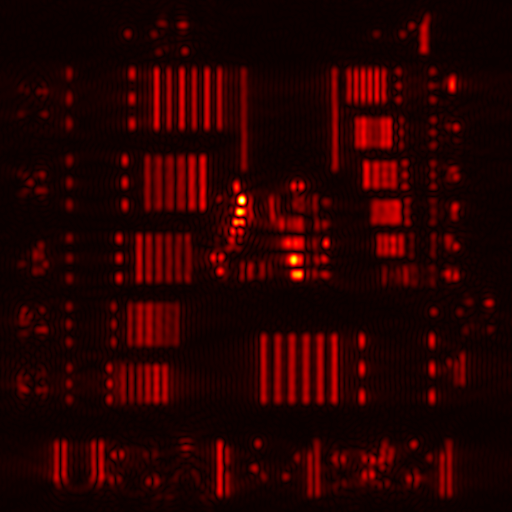} &
        \includegraphics[width=\linewidth]{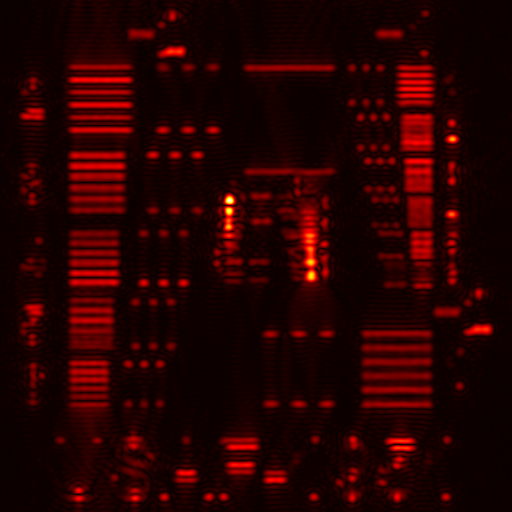} &
        \includegraphics[width=\linewidth]{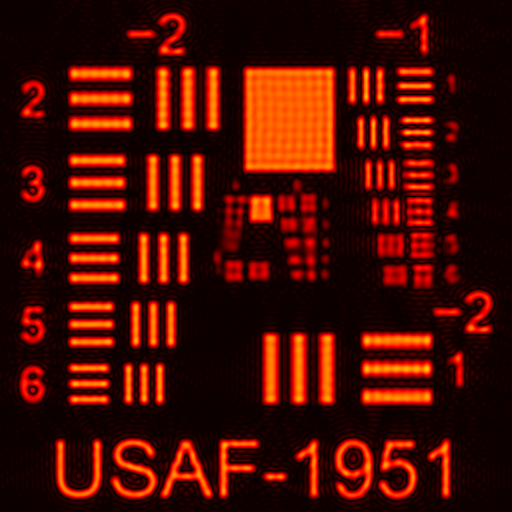} \\
    
        \rotatebox{90}{\small Diffuse} &
        \includegraphics[width=\linewidth]{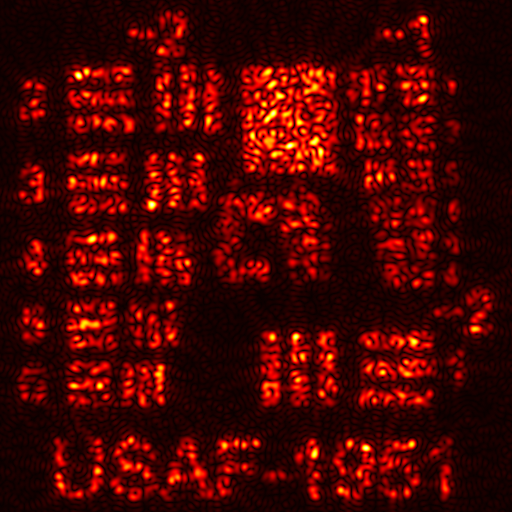} &
        \includegraphics[width=\linewidth]{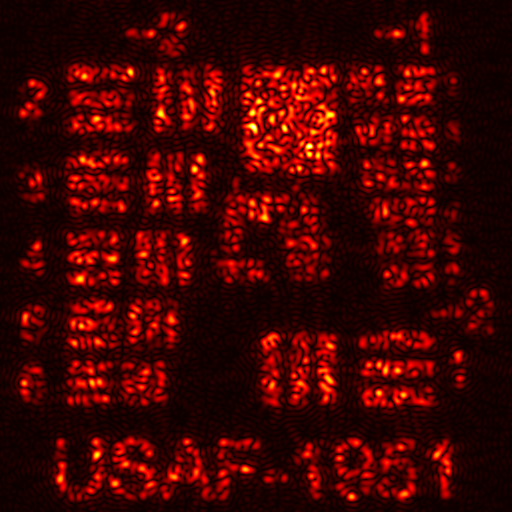} &
        \includegraphics[width=\linewidth]{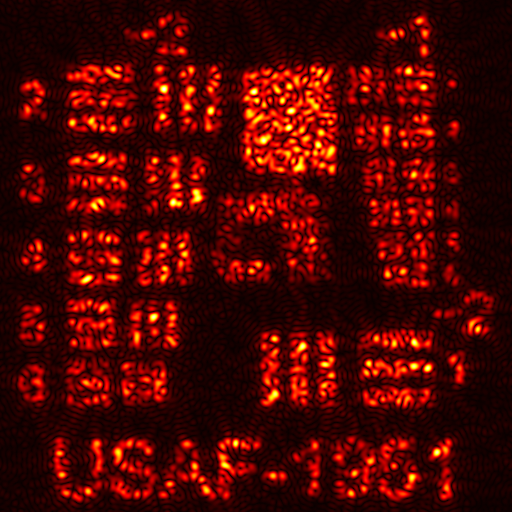} &
        \includegraphics[width=\linewidth]{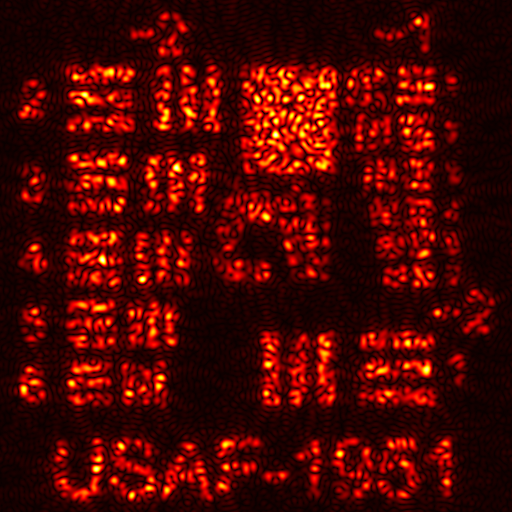} \\
    
        \rotatebox{90}{\small Average Diffuse} &
        \includegraphics[width=\linewidth]{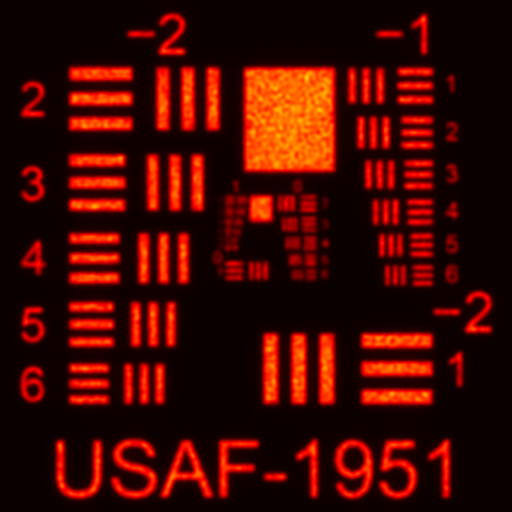} &
        \includegraphics[width=\linewidth]{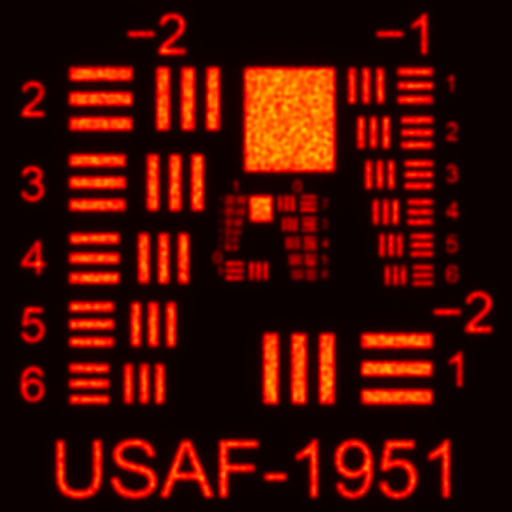} &
        \includegraphics[width=\linewidth]{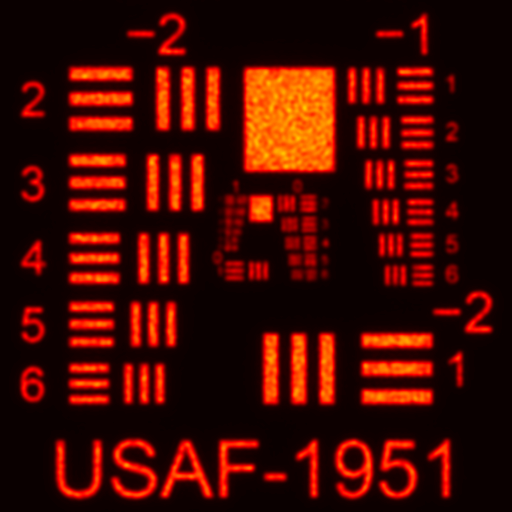} &
        \includegraphics[width=\linewidth]{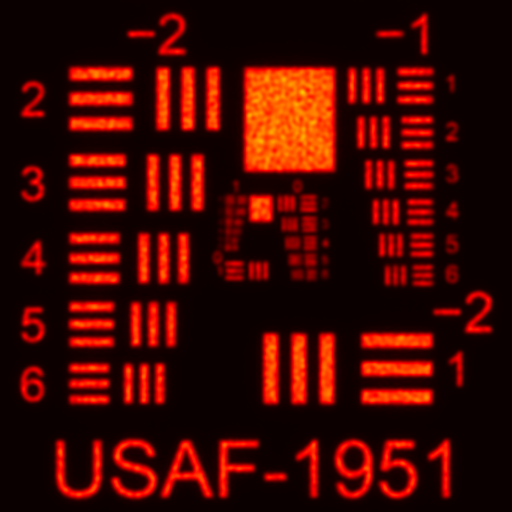} \\
    \end{tabular}

    \caption{\textbf{Comparison of specular and diffuse targets measured from different aperture positions.} 
    This figure compares image-plane intensity measurements of specular and diffuse targets as one varies the (Fourier-plane) aperture location.
    With specular targets, each aperture captures complementary frequency content about the scene.
    By contrast, the distribution of diffuse target measurements is translation invariant. Speckle averaging at each aperture location produces results in nearly identical (non-complementary) images.
    }
    \label{fig:comparison}
\end{figure}

\section{Related Work}\label{sec:related work}

\subsection{Light-Based Synthetic Aperture Imaging}
Light-based SA imaging is well established. 
In 2011, Tippie et al.~demonstrated one could perform light-based SA imaging by capturing overlapping holography (complex field) measurements~\cite{tippie_high-resolution_2011}. 
In 2013, Zheng et al.~introduced Fourier Ptychography (FP); a powerful SA technique that combines intensity-only measurements with phase retrieval algorithm to perform high resolution microscopy with a wide-field-of-view~\cite{ou_quantitative_2013}. 
Over the last decade, FP has become an important and widely used tool in microscopy \cite{ou_quantitative_2013, zheng_concept_2021, ou_embedded_2014, eadie_fourier_nodate, you_self-calibrating_2025}.

FP has since been extended to long-range imaging \cite{dong_aperture-scanning_2014}. 
Holloway et al. and Li et al.~demonstrated successful application of FP at ranges of several meters~\cite{holloway_savi_2017, li_far-field_2023}. 
Zhang et al.~recently demonstrated FP at 170m~\cite{zhang_200_2024}. 
We attribute this success, which is inconsistent with our theory, to correlations between measurements and/or specular components in the scenes.  

\subsection{Snapshot Imaging}

Unlike its counterpart in radar, measuring the complex optical field in light-based SA systems is challenging due to high optical frequencies. 
For this reason, FP rely on phase retrieval algorithms, which generally require significant overlap between the measurements' aperture positions~\cite{dong_sparsely_2014}. 
Typically, this aperture overlap is achieved by capturing measurements sequentially, however, several recent works have sought to capture these overlapping measurements in a snapshot.

Wang et al.~recently demonstrated one could use metalenses to capture overlapping aperture measurements in a single snapshot~\cite{wang_metalens-based_2024}. 
In parallel, Li et al.~introduced an illumination wavelength multiplexing approach to capture overlapping measurements in snapshot~\cite{li_snapshot_2024}. This approach assumes the target's albedo is constant across wavelengths.  
Alternatively, one can capture measurements with non-overlapping measurements and use deep-learning to inpaint the missing frequency content~\cite{wang2022snapshot, wang_learning-based_2023}.
Note that these methods do not measure this missing frequency content and so are prone to hallucination.

In contrast, our proposed method can perform SA imaging with conventional optical elements, without assumptions on the target's albedo, and without risk of hallucination. 

\subsection{Structured Illumination with Speckle}
In this work we exploit speckle to improve our systems resolution. Related, but distinct, ideas have been applied on several occasions.

Munson et al.~demonstrated that one can reconstruct the magnitude of a complex scene from frequency offset data, i.e., a small region in the Fourier transform offset from the origin, given that the phase is highly random~\cite{munson_image_1984}. 
Similarly, we reconstruct the magnitude of the complex scene with the highly random phase provided by speckle. 
However, they only use a fixed region in the Fourier transform whereas we combine multiple regions thus forming a synthetic aperture. 

Dong et al.~proposed illuminating a scene with a moving diffuser to achieve super-resolution by jointly estimating the scene and the unknown illumination pattern~\cite{dong_incoherent_2015}. 
Our problem differs substantially from this work: In their method, the unknown illumination pattern is constant up-to translation, and there are multiple measurements with that pattern, facilitating separation and estimation of the scene and the unknown illumination pattern. 
In contrast, we assume diffuse objects with ever-changing environmental conditions produce boiling speckle patterns, and each measurement undergoes a different unknown illumination pattern.

\subsection{Phase Synchronization \& Wavefront Shaping}

In this work we use variance maximization to synchronize the phase offsets between non-overlapping apertures. 
Our approach draws inspiration from a large body of related work.

Optimizing sharpness and related image quality metrics has long been used for digital refocusing and related problems~\cite{fienup2000synthetic}. 
In their SA holography work, Tippie et al.~synchronized {\em overlapping} aperture measurements using a two-step cross-correlation registration, where they use a sharpness metric for phase correction~\cite{tippie_high-resolution_2011}. 
In the context of wavefront shaping, Yeminy et al.~optimized the phase of a Spatial Light Modulator (SLM) by maximizing the variance of the reconstructed image~\cite{yeminy_guidestar-free_2021}.
Similarly Haim et al.~introduced an image-guided holographic wavefront shaping approach wherein they estimate virtual SLM patterns to improve the quality of a reconstructed image~\cite{haim_image-guided_2025}. 
To our knowledge, no previous works have used variance maximization to synchronize the phase offsets between {\em non-overlapping} aperture measurements.

\section{Translation Invariance \& Impossibility}\label{sec:impossibility}

In this section, we prove that the distribution of far-field measurements is independent of the aperture position, i.e., the distribution is translation invariant. 
From translation invariance it follows that light-based SA imaging is impossible for fully diffuse scenes with per-measurement-independent speckle.  

\begin{theorem}[Translation Invariance]
\label{thm:translation-invariance}
For a fully diffuse target, which produces fully developed speckle, and under per-measurement-independent and identically distributed speckle, the distribution of far-field measurements is invariant to the aperture translation. 
Mathematically
\begin{equation}
p(U_{\ell_1} | A_{\ell_1}) = p(U_{\ell_2} | A_{\ell_2}), \quad \forall \ell_1, \ell_2, \label{eq:translation-invaraince}
\end{equation}
where $U_{\ell}$ denotes far-field optical measurement at the image plane, and $A_\ell$ denotes the aperture at the $\ell$\textsuperscript{th} position.
\end{theorem}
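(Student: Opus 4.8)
The plan is to show that the random field $U_\ell = A_\ell \odot \F(x \odot \eta)e^{j\phi_\ell}$ has a distribution that does not depend on which translate of the base aperture $A_\ell$ is used, by reducing everything to a statement about the covariance structure of $\F(x\odot\eta)$. First I would observe that the unknown global phase $e^{j\phi_\ell}$ can be absorbed: if $\phi_\ell$ is uniform on $[0,2\pi)$ it only randomizes an overall phase, and in any case a deterministic $\phi_\ell$ does not affect the \emph{distribution} argument below since $e^{j\phi_\ell}$ is a unitary scalar. So it suffices to prove that the complex field $V := \F(x \odot \eta)$, restricted (via $A_\ell \odot \cdot$) to any translate of a fixed window, has a translation-invariant distribution. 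Since $\eta$ is circular complex Gaussian, $x\odot\eta$ is zero-mean circular complex Gaussian with diagonal covariance $\mathrm{diag}(|x|^2)$, hence $V = \F(x\odot\eta)$ is a zero-mean circular complex Gaussian vector; its law is therefore completely determined by its covariance $K = \F \,\mathrm{diag}(|x|^2)\, \F^{*}$.

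The key computation is the structure of $K$. Writing $\F$ as the DFT, the $(u,u')$ entry of $K$ is $\sum_{n} |x_n|^2 e^{-j 2\pi (u - u')\cdot n / N}$, which depends only on the difference $u-u'$; that is, $K$ is (block-)Toeplitz / stationary — it equals $\widehat{|x|^2}(u-u')$, the Fourier transform of the albedo evaluated at the frequency lag. This is the crux: \emph{the far-field of a fully diffuse object is a wide-sense-stationary circular Gaussian field}. Consequently, for a window $W_{\ell}$ that is a translate of $W_{\ell'}$, the covariance of the restricted vectors $A_\ell \odot V$ and $A_{\ell'} \odot V$ are identical (both equal the same Toeplitz submatrix determined by the set of lags inside the window, independent of where the window sits), and since both are zero-mean circular Gaussian, their distributions coincide — which is exactly \eqref{eq:translation-invaraince}. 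Reintroducing the scalar $e^{j\phi_\ell}$ preserves this because multiplying a circular Gaussian by a fixed unit-modulus scalar leaves its distribution unchanged.

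I would structure the written proof as: (1) reduce to $V=\F(x\odot\eta)$ via the phase argument; (2) note $x\odot\eta$ is circular Gaussian, hence so is $V$, hence its law is fixed by its covariance; (3) compute $\E[V_u \overline{V_{u'}}] = \widehat{|x|^2}(u-u')$ and $\E[V_u V_{u'}]=0$ (circularity), establishing stationarity; (4) conclude that the masked vectors for any two aperture translates have the same mean (zero) and same covariance, hence the same distribution. The main obstacle — and the place to be careful — is step (3)/(4) and precisely what "translation" means on a discrete grid: one must assume the apertures are genuine integer translates of one another (or work on the continuum / torus) so that the relevant submatrix of the Toeplitz covariance is literally the same matrix, not merely similar; boundary/finite-aperture effects and the periodicity of the DFT need a clean statement so that "the set of frequency lags inside $A_\ell$" is translation-independent. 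I also expect to need the standing "fully developed speckle" assumption explicitly, since that is what guarantees $\eta$ is \emph{exactly} circular Gaussian (so that $V$ is exactly Gaussian and the first two moments really do determine the law) and the "i.i.d.\ per measurement" assumption to justify treating each $\eta_s$ as an independent draw of this same stationary field.
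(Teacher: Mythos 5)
Your proof is correct, but it takes a genuinely different route from the paper's. The paper works in the image plane: it applies the convolution theorem to write $u_\ell = a_\ell \ast (x\odot\eta)$, notes that translating the aperture multiplies $a$ by a linear phase ramp $e^{j2\pi k\ell/N}$, and then absorbs that ramp into the speckle using the rotation invariance of each circular-Gaussian sample $\eta[n-k]$. You instead stay in the aperture (Fourier) plane and argue at the level of the full joint law: $V=\F(x\odot\eta)$ is a zero-mean proper complex Gaussian whose covariance $K_{u,u'}=\widehat{|x|^2}(u-u')$ depends only on the frequency lag (and whose pseudo-covariance vanishes), so the far field of a delta-correlated source is wide-sense stationary, and any two translated windows extract the same Toeplitz submatrix, hence the same distribution. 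What each approach buys: the paper's argument is more elementary and constructive, but as written it reasons marginal-by-marginal (each $\eta[n-k]e^{j2\pi k\ell/N}\sim CN(0,1)$) and leaves implicit the passage from equality of these marginals to equality of the joint law of $u_\ell$ --- the same speckle sample $\eta[m]$ appears in many summands with $\ell$-dependent phases, so the joint structure must be tracked; your second-moment argument handles the joint distribution in one shot and makes transparent exactly where ``fully developed speckle'' is load-bearing (exact circular Gaussianity, so mean, covariance, and pseudo-covariance determine the law). The caveats you flag --- that the claim must be read as equality of the measurements indexed relative to each aperture's own support (otherwise the supports, and a residual linear phase ramp in the image plane, differ), and that discrete translates and DFT periodicity need a clean statement --- are genuine and apply equally to the paper's version; neither proof is complete without resolving them, but both resolve them the same way.
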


\begin{proof}
     See Appendix~\ref{sec:appendix-a}.
\end{proof}
The proof uses the convolution property of Fourier transformation and the rotational invariance of circular Gaussian distribution, which maintains that multiplying by a phase factor does not change the statistics of the circular Gaussian.

From \Cref{thm:translation-invariance} it follows that  far-field image-plane intensity measurements (such as those captured in Fourier ptychography) are also translation invariant under per-measurement-independent speckle noise. 

This result is corroborated by Fig.~\ref{fig:comparison}, which compares specular and diffuse scene images captured from various aperture positions. 
As one averages the diffuse scene observations at a particular aperture location over many speckle realizations, each converges to the same image; their first moments are the same.

\begin{corollary}[Impossibility]
\label{corr:impossibility}
Light-based synthetic aperture imaging is impossible for fully diffuse scenes under per-measurement-independent speckle.
\end{corollary}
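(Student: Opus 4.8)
The plan is to derive the corollary as a direct consequence of Theorem~\ref{thm:translation-invariance}, using the standard information-theoretic principle that no estimator can distinguish hypotheses that induce identical data distributions. First I would set up the decision-theoretic framing: an SA imaging system, having fixed its physical aperture layout, must recover information about the scene $|x|^2$ that lives in spatial frequencies beyond the support of any single aperture $A_\ell$; it can only do so by combining the measurements $\{U_{\ell,s}\}$ collected across the different aperture positions $\ell$. The key observation is that the measurement model~\eqref{eq:problem-definition} is equivariant under translation of a subaperture in the Fourier plane only if it carries complementary frequency content --- but Theorem~\ref{thm:translation-invariance} shows that for a fully diffuse target under per-measurement-independent speckle, $p(U_{\ell_1}\mid A_{\ell_1}) = p(U_{\ell_2}\mid A_{\ell_2})$ for all $\ell_1,\ell_2$. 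Since the speckle realizations $\eta_s$ are independent across measurements, the joint distribution of the full measurement set factorizes, and each factor is, up to the (known, scene-independent) placement of $A_\ell$, the \emph{same} distribution. Hence the data obtained from a distributed array at positions $\{\ell\}$ is statistically indistinguishable from the data one would obtain by repeating measurements at a single position.

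The second step is to convert this statistical indistinguishability into an impossibility statement. I would argue by contradiction: suppose there existed a reconstruction procedure that, from sequentially captured measurements, recovered albedo frequency content outside the passband of the individual apertures. Consider two diffuse scenes $x_1$ and $x_2$ whose albedos $|x_1|^2$ and $|x_2|^2$ agree on the union of the aperture passbands but differ on the ``synthetic'' region in between; such a pair exists precisely because the apertures are non-overlapping and do not tile the desired synthetic support. By Theorem~\ref{thm:translation-invariance} applied to each aperture, and by the independence (product) structure of the sequential measurements, the induced joint measurement distributions for $x_1$ and $x_2$ coincide. Therefore no measurable function of the data can separate them, contradicting the assumed success of the reconstruction procedure. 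The conclusion is that the super-resolution (synthetic-aperture) gain is unattainable: one can do no better than a single physical aperture.

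A subtlety worth spelling out is the role of the unknown phase offsets $\phi_{\ell,s}$ and the fact that, even within a single aperture passband, only the \emph{first moment} of the measurement distribution is informative (as noted in the discussion after the theorem and illustrated by the ``Average Diffuse'' row of Fig.~\ref{fig:comparison}); the argument does not claim that nothing is recoverable, only that nothing \emph{beyond} what a single aperture sees is recoverable. I would therefore phrase the corollary's conclusion carefully as: the synthetic aperture confers no resolution improvement over its largest constituent physical aperture when measurements are captured sequentially with independent speckle.

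The main obstacle I anticipate is not technical depth but precision of statement: one must be careful that Theorem~\ref{thm:translation-invariance} is about the \emph{marginal} distribution at each aperture, so extending it to the joint distribution of all sequential measurements requires explicitly invoking per-measurement independence of $\eta_s$ (and of $\phi_{\ell,s}$), and one must be careful to define ``impossible'' operationally --- i.e., in terms of the non-existence of a consistent estimator for the out-of-band albedo --- rather than leaving it informal. With those two points handled, the corollary follows in a few lines.
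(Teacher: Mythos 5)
Your first paragraph reproduces the paper's own argument for \Cref{corr:impossibility}: the paper's justification is precisely that, by \Cref{thm:translation-invariance} together with per-measurement independence of the speckle, the sequentially captured measurements are independent and identically distributed no matter where the apertures sit, so the premise of SA imaging---that translating the aperture records complementary frequency content---is void. The paper leaves the matter at that level of informality, so on the main route you and the paper coincide.

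Your second, two-hypothesis paragraph is a welcome attempt to make the impossibility claim operational, but as written it contains a gap: you assert that ``by \Cref{thm:translation-invariance} applied to each aperture \ldots the induced joint measurement distributions for $x_1$ and $x_2$ coincide.'' \Cref{thm:translation-invariance} compares aperture \emph{positions} for a \emph{fixed} scene; it says nothing about when two \emph{different} scenes induce the same measurement distribution, so it cannot by itself produce the indistinguishable pair $(x_1,x_2)$. To close this step you need the additional fact (implicit in the computation of Appendix~\ref{sec:appendix-b}) that each subaperture measurement is a zero-mean circular Gaussian and hence fully determined by its covariance, $\E\left[U(f)\,U^{*}(f')\right] = A(f)\,A^{*}(f')\,\F(|x|^2)(f-f')$, which depends on the albedo only through $\F(|x|^2)$ restricted to the \emph{autocorrelation} support of a single subaperture. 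With that in hand, you should choose $|x_1|^2$ and $|x_2|^2$ to agree on that autocorrelation support---not, as you wrote, on the union of the aperture passbands, since the relevant passband for the albedo is $R_A$ rather than $A$ itself---and to differ outside it; then the two joint distributions genuinely coincide and your contradiction goes through. Your closing caveat, that the correct operational conclusion is ``no resolution gain beyond a single physical aperture'' rather than ``nothing is recoverable,'' is well taken and is in fact sharper than the paper's own wording.
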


Consider the reconstruction of a high-resolution image of a scene using synthetic aperture imaging. 
Let us suppose that the scene is fully diffuse, meaning that the observed speckle in the image is fully developed. 
If sequential scanning is used to capture multiple images from different aperture positions, and the time difference between two exposures exceeds the de-correlation time of the speckle such that it makes the two speckle realizations independent of each other, we can conclude that all the images come from the same statistical distribution regardless of the aperture position. 
That is, the core assumption underlying SA imaging---that moving the aperture captures different frequency components---is invalid. 
As a result, light-based SA imaging is impossible in such a setting. 

\section{Snapshot Synthetic Aperture Imaging}\label{sec:snapshot-imaging}

To avoid speckle independence between measurements---and enable light based SA imaging---one can capture the subaperture measurements simultaneously in a snapshot with a sensor array, as illustrated in Fig.~\ref{fig:three-camera-setup}.  
The forward model associated with such an imaging system is described by~\eqref{eq:problem-definition}, which is reproduced here:
\begin{equation}
    U_{\ell,s} =A_\ell \odot \F(x \odot \eta_s)e^{j{\phi}_{\ell,s}}.\nonumber
    \label{eq:problem-definition_copy}
\end{equation}
Whereas with sequential measurements each subaperture observes a unique speckle realization, with a snapshot system all subapertures observe the same speckle realization. 
That is, for each speckle realization $\eta_s$, we have the full set of $L$ subaperture measurements $\{U_{\ell,s}\}_{\ell=1,...L}$. 
Note, however, that each aperture will still have its own unique global phase offset $\phi_{\ell,s}$.

\begin{figure}[t]
    \centering
    \begin{subfigure}[b]{0.495\linewidth}
        \includegraphics[width=\linewidth]{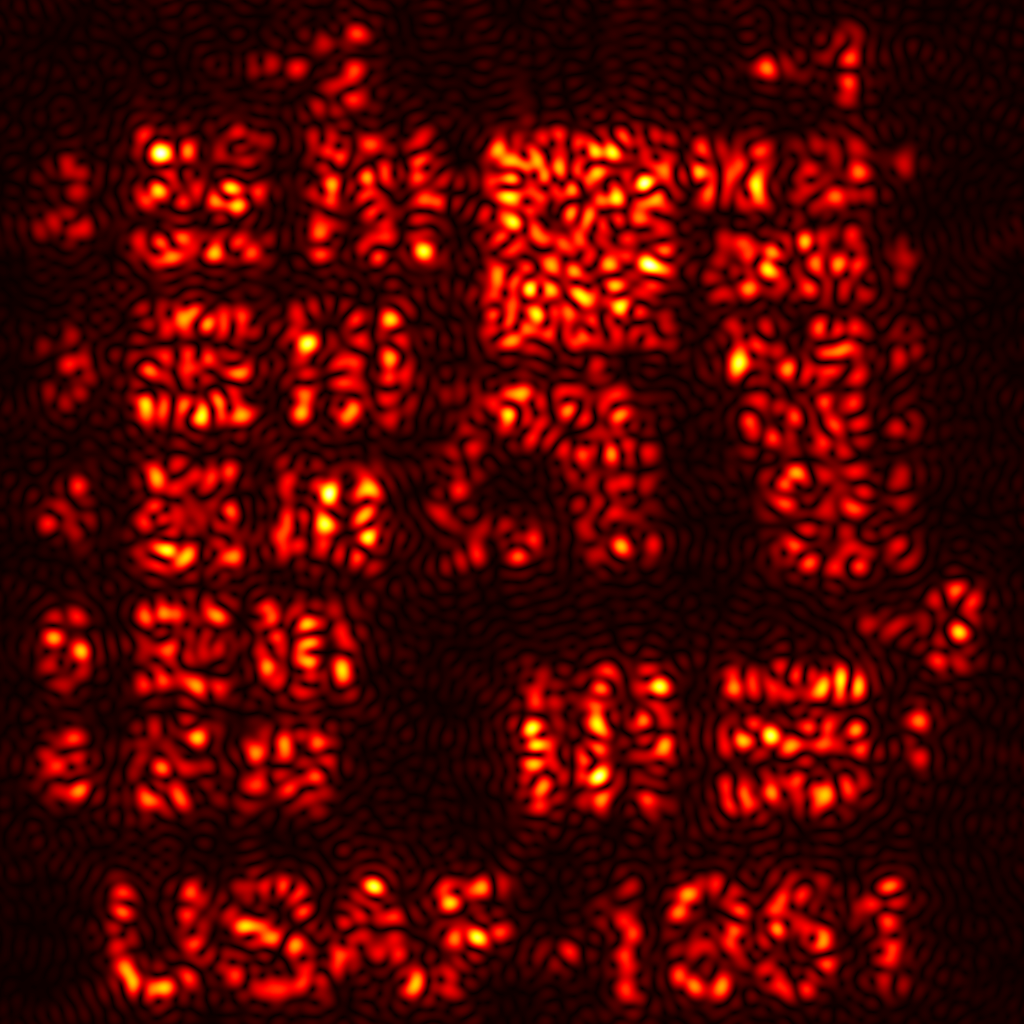}
        \caption{Center}\label{fig:intermediate-reconstruction-a}
    \end{subfigure}\hfill
    \begin{subfigure}[b]{0.495\linewidth}
        \includegraphics[width=\linewidth]{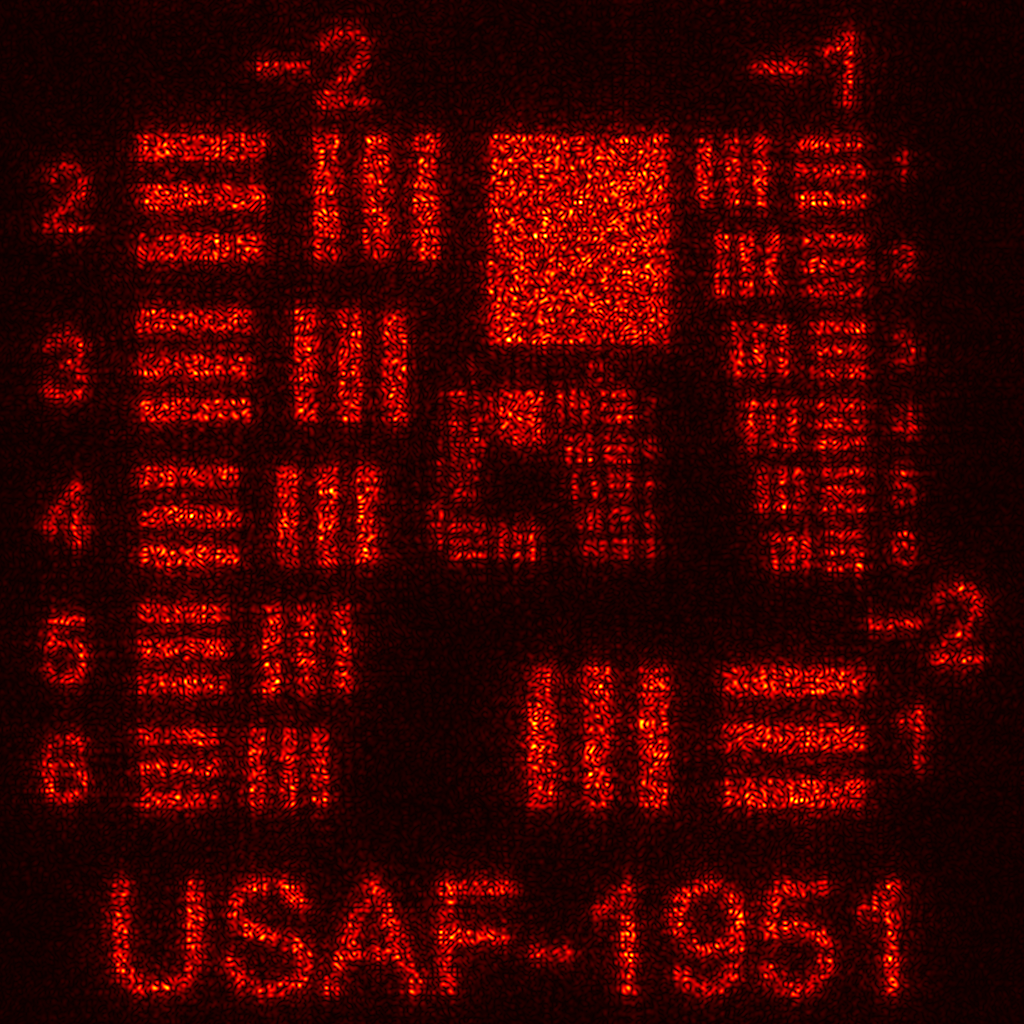}
        \caption{Composite}\label{fig:intermediate-reconstruction-b}
    \end{subfigure}
    \caption{\textbf{Comparison of single and composite aperture images.} 
    The composite aperture, formed by coherently combining fields captured from the snapshot optical field sensor, recovers high-frequency details.}
    \label{fig:intermediate-reconstruction}
\end{figure}

\begin{figure*}[t]
    \centering
    \begin{subfigure}[t]{0.7\textwidth}
        \centering
        \begin{subfigure}[t]{0.48\textwidth}
        \centering
            \includegraphics[width=\linewidth]{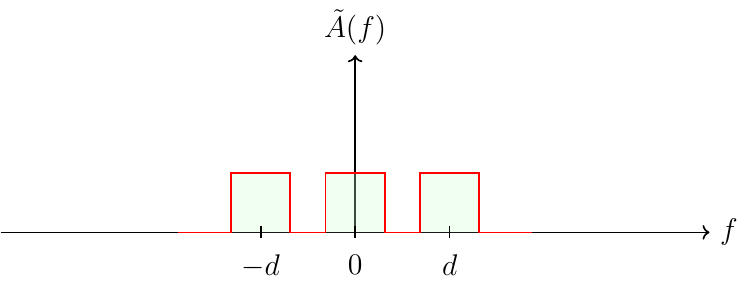}
            \caption*{(a) Composite Aperture}
        \end{subfigure}
        \begin{subfigure}[t]{0.48\textwidth}
        \centering
            \includegraphics[width=\linewidth]{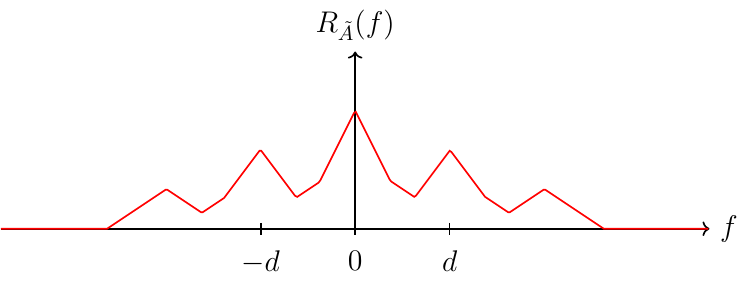}
            \caption*{(b) Autocorrelation Filter}
        \end{subfigure}
        \par\medskip
        \begin{subfigure}[t]{0.48\textwidth}
        \centering
            \includegraphics[width=\linewidth]{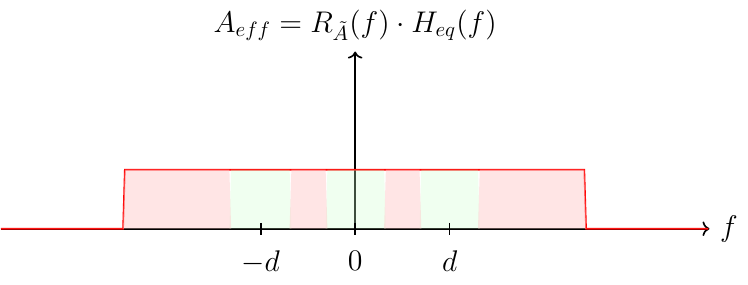}
            \caption*{(d) Effective Transfer Function}
            \phantomcaption\label{fig:theory-illustration-d}
        \end{subfigure}
        \begin{subfigure}[t]{0.48\textwidth}
        \centering
            \includegraphics[width=\linewidth]{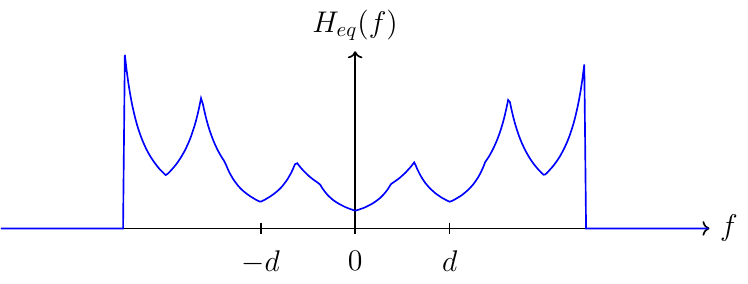}
            \caption*{(c) Equalization Filter}
        \end{subfigure}
    \end{subfigure}
    \hfill
    \begin{subfigure}[t]{0.25\textwidth}
        \centering
        \begin{subfigure}[t]{0.48\textwidth}
        \centering
            \includegraphics[width=\linewidth]{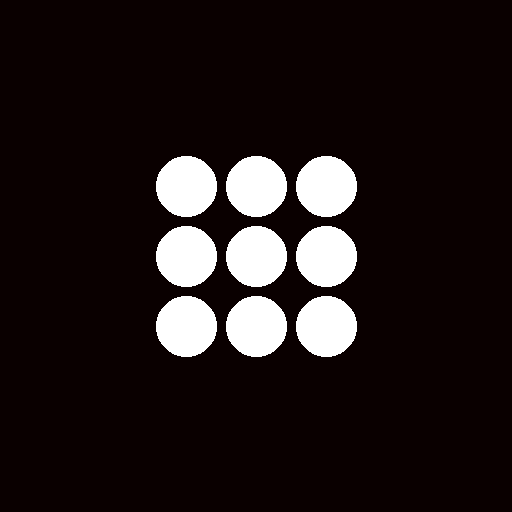}
            \caption*{(e) Composite}
        \end{subfigure}
        \begin{subfigure}[t]{0.48\textwidth}
        \centering
            \includegraphics[width=\linewidth]{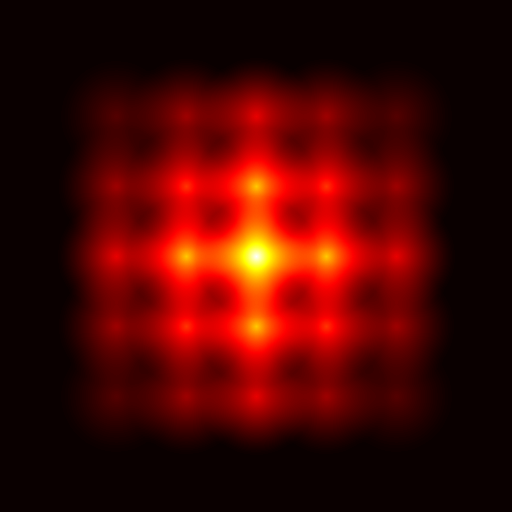}
            \caption*{(f) Autocorr.}
        \end{subfigure}
        \par\medskip
        \begin{subfigure}[t]{0.48\textwidth}
        \centering
            \includegraphics[width=\linewidth]{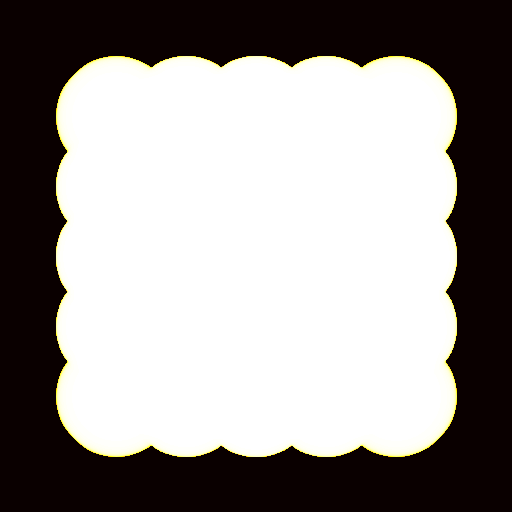}
            \caption*{(h) Effective}
        \end{subfigure}
        \begin{subfigure}[t]{0.48\textwidth}
        \centering
            \includegraphics[width=\linewidth]{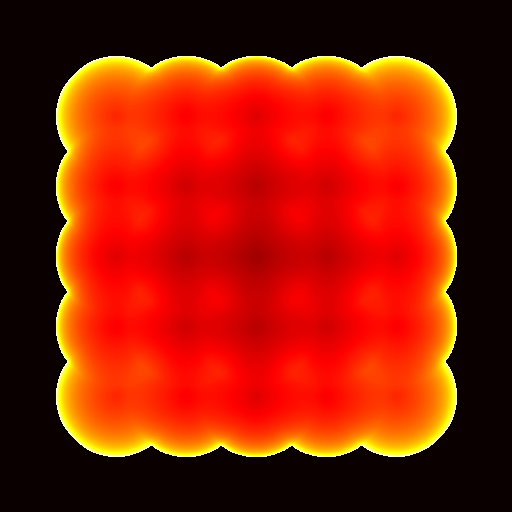}
            \caption*{(g) Equalizer}
        \end{subfigure}
    \end{subfigure}
    
    \caption{\textbf{Illustration of the underline transfer functions (or apertures) governing the proposed imaging process.} This figure shows the effect of averaging and equalization for both one-dimensional and two-dimensional signals/images. (a) composite aperture; (b) autocorrelation filter; (c) equalization filter; (d) effective transfer function; (e) composite aperture; (f) autocorrelation filter; (g) equalization filter; (h) effective transfer function.
    Incoherent averaging and equalization expands the composite aperture thus filling in the missing information between subapertures with true spectral information without hallucinations.}
    \label{fig:theory-illustration}
\end{figure*}

\begin{figure}[t]
    \centering
    \includegraphics[width=\linewidth]{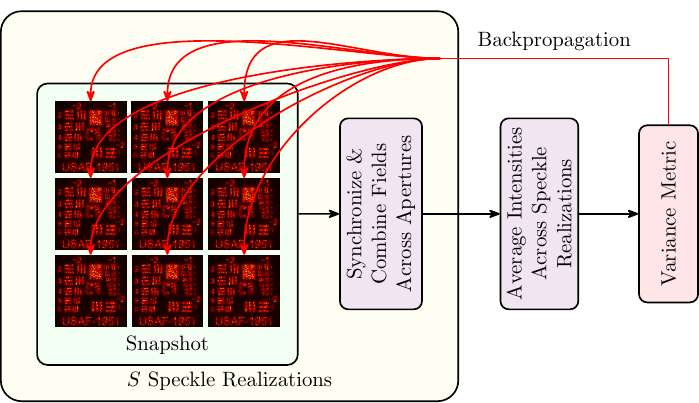}
    \caption{\textbf{Computational synchronization}. Unknown global phase offset estimation diagram. Backpropagation is used to update the global phase offsets in order to maximize the variance of the reconstructed image.
    }
    \label{fig:synchronization}
\end{figure}

\begin{figure}[t]
    \centering
    \begin{subfigure}[b]{0.495\linewidth}
        \includegraphics[width=\linewidth]{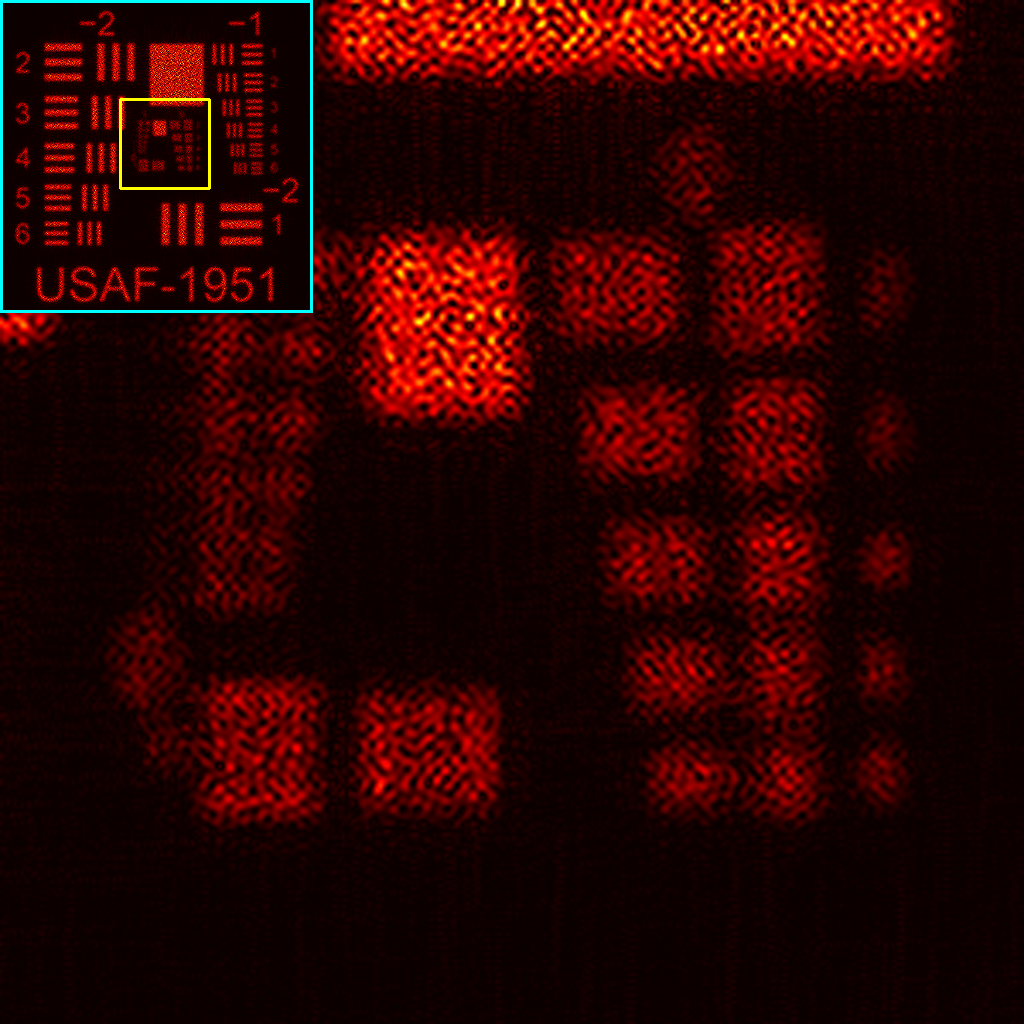}
        \caption{Without synchronization.}
    \end{subfigure}\hfill
    \begin{subfigure}[b]{0.495\linewidth}
        \includegraphics[width=\linewidth]{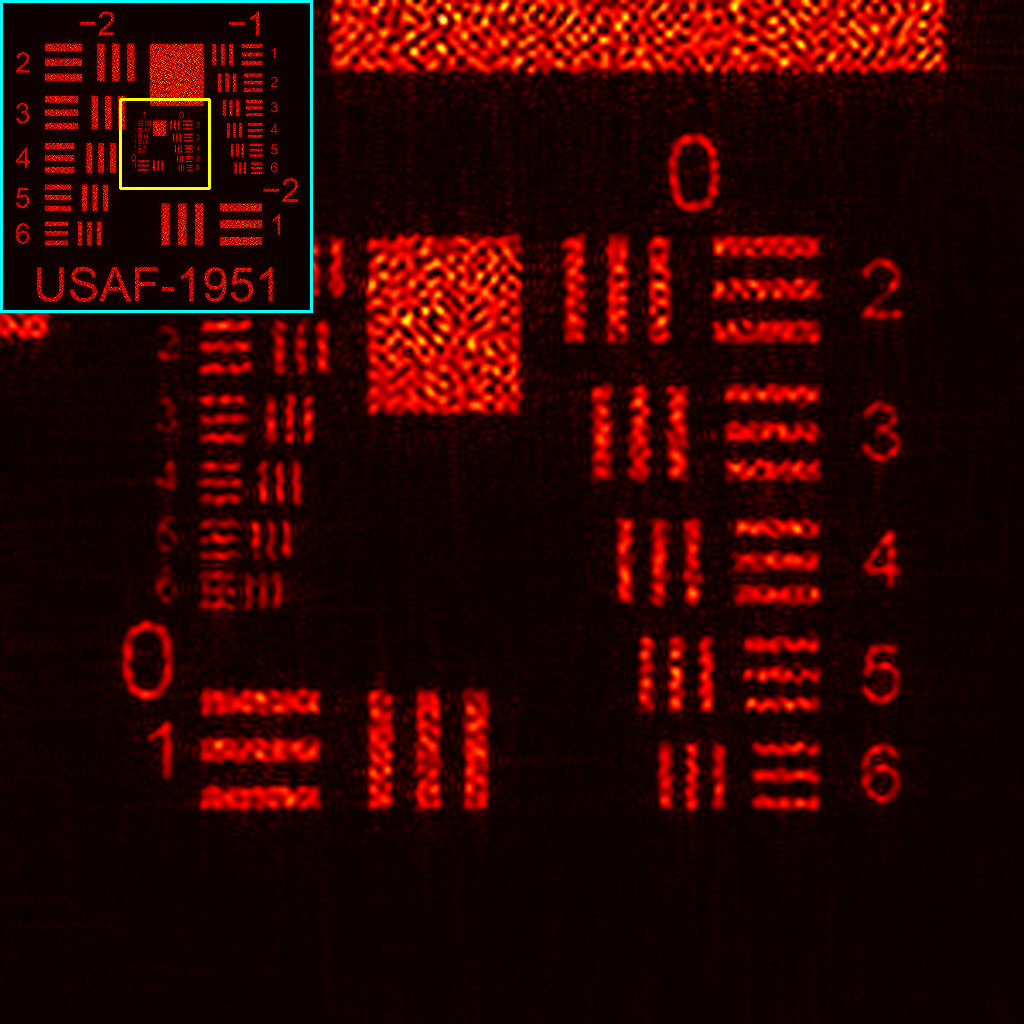}
        \caption{With synchronization.}
    \end{subfigure}
    \caption{\textbf{Effect of synchronization.} (a) and (b) are equalized images without and with synchronization, respectively. For clarity, cropped region from the middle is shown, cropped area is indicated on the inset at top-left. Synchronization improves resolution.}
    \label{fig:effect-of-synchronization}
\end{figure}
\subsection{Coherent Integration Across Apertures}

Assume we have estimates, $\{\hat{\phi}_{\ell,s}\}_{s=1...S,\ell=1...L}$, for each of these phase offsets. 
(Section~\ref{sec:synchronization} describes a procedure for estimating these phase offsets.)
With these offsets in hand, we coherently integrate the subaperture measurements to form a synthetic aperture measurement. 
We then map this measurement to the image plane (perform an inverse Fourier transform) and take the intensity of the result:
\begin{equation}
    \tilde{I}_s = \left|\Fi\left[\sum_{\ell=1}^L U_{\ell,s}\cdot e^{-j\hat{\phi}_{\ell,s}}\right]\right|^2.
    \label{eq:composite-measurement}
\end{equation}

If the phase offsets between subapertures are known precisely, this process produces an image indistinguishable to what one would have captured had you imaged the scene with a large lens with composite aperture $\tilde{A}=\sum_{\ell=1}^L A_\ell$. 
That is
\begin{equation}
    \tilde{I}_s = \left|\Fi\left[\tilde{A} \odot \F(x \odot \eta_s)\right]\right|^2.
    \label{eq:composite-aperture-imaging}
\end{equation}

An image reconstructed in this way using a $3\times3$ field sensor array is shown in Fig.~\ref{fig:intermediate-reconstruction-b}, and the intensity of the field captured by the center field sensor is shown in Fig.~\ref{fig:intermediate-reconstruction-a}. 
The composite aperture clearly provides a resolution improvement. 
Thus, if one can synchronize the apertures, snapshot imaging systems can perform SA imaging in the presence of speckle. 
However, the reconstructed image still contains significant speckle noise. 

\subsection{Incoherent Averaging Across Speckle Realizations}
\label{sec:equalized-speckle-averaging}

To mitigate speckle, we perform image-plane intensity averaging. 
That is, we average the images $\tilde{I}_s$ over speckle realizations. 
When we have a large number of speckle realizations $S$, this averaging procedure provides a useful Monte Carlo approximation of $\tilde{I}_s$'s expectation:
\begin{align}
    \hat{I}=\frac{1}{S}\sum_s \tilde{I}_s \approx \mathbb{E}[\tilde{I}_s].
    \label{eq:summation}
\end{align}

In Appendix~\ref{sec:appendix-b} we prove that $\tilde{I}_s$'s expectation is equal to the scene
albedo $|x|^2$ filtered by the autocorrelation function of the composite aperture $\tilde{A}$. 
That is
\begin{align}
    \mathbb{E}[\tilde{I}_s] = \mathcal{F}^{-1}\left[R_{\tilde{A}} \odot \mathcal{F}(|x|^2)\right]
    \label{eq:corr_response}
\end{align}
where $R_{\tilde{A}}$ is the autocorrelation of the aperture $\tilde{A}$. 
Accordingly, $\hat{I}$ provides us a filtered estimate of $|x|^2$.

Notably, the support of $R_{\tilde{A}}$ (which determines which of the scene's spatial frequencies are recorded) is larger than the support $\tilde{A}$: not only does speckle averaging suppress the speckle noise, it also \emph{broadens the observed area of spectrum while filling in the missing information between subapertures}. 
Consequently, we can perform hallucination-free synthetic aperture imaging {even with non-overlapping apertures}.  

\subsection{Distributed Aperture Phase Synchronization}
\label{sec:synchronization}

In this work we  computationally synchronize the subapertures by maximizing the variance of our (biased) albedo estimate $\hat{I}$:
\begin{align}
    \hat{\Phi}
    &= \argmax_{\Phi}
    \; \Var\bigl(\hat{I}(\Phi)\bigr),
    \label{eq:phase-estimation}
\end{align}
where $\Phi = \{ \phi_{\ell,s} : \ell = 1,\dots,L,\; s = 1,\dots,S \}$ and $\hat{I}$ is defined according to averaging in~\eqref{eq:summation}.
We employ gradient ascent to solve this optimization problem. 
Fig.~\ref{fig:synchronization} depicts a diagram of our computational phase synchronization method. 
The importance of this synchronization is illustrated in Fig.~\ref{fig:effect-of-synchronization}, which illustrated coherently integrated measurements with and without synchronization.  
Note that all other figures in this manuscript use synchronized aperture measurements.

\subsection{Equalization Filtering}

Speckle averaging produces a filtered estimate of the true scene albedo. 
(Filter responses are illustrated in Fig.~\ref{fig:theory-illustration}b and Fig.~\ref{fig:theory-illustration}c.) 
To compensate for these responses, we can (optionally) apply an equalization filter 
\begin{equation}
    H_{eq}(f) = 
        \begin{cases}
        \frac{1}{R_{\tilde{A}}(f)+\sigma}, & \text{if } R_{\tilde{A}}(f) \geq \epsilon \\
        0,  & \text{if } R_{\tilde{A}}(f) < \epsilon,
        \end{cases}
    \label{eq:equalization-filter}
\end{equation}
where $f$ indexes spatial frequency and $\sigma$ and $\epsilon$ are user specified parameters  
used to avoid overly amplifying noise (include finite sample approximation noise) in regions of weak response. 
The estimated albedo of the scene is then given by
\begin{equation}
    \hat{I}_{eq}= \Fi\left[H_{eq} \odot \F(\hat{I})\right].
    \label{eq:white}
\end{equation}

The overall imaging framework is illustrated in Fig.~\ref{fig:imaging-pipeline}. Coherent integration, followed by incoherent intensity averaging, followed by equalization can produce a speckle-free high-resolution image with features well past the diffraction limit of the individual subapertures.

\begin{figure*}[t]
    \centering
    \includegraphics[width=\linewidth]{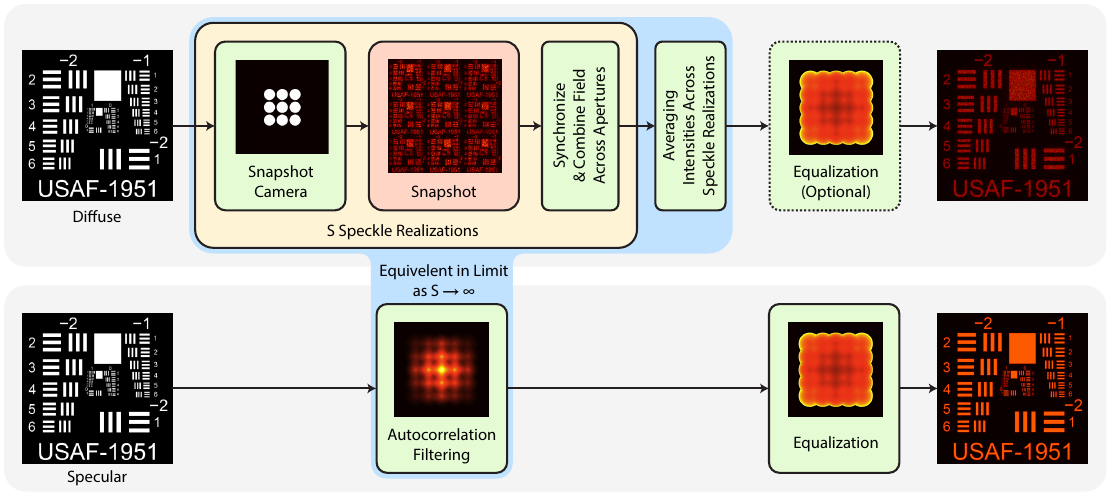}
    \caption{\textbf{Proposed imaging process.} By synchronizing and coherently integrating fields across subapertures and averaging intensities across speckle realizations, our SA imaging framework can perform hallucination-free distributed aperture SA imaging.}
    \label{fig:imaging-pipeline}
\end{figure*}

\newlength{\mylength}   
\setlength{\mylength}{0.198\textwidth}  
\begin{figure*}[t]
    \centering

    \begin{subfigure}[b]{\mylength}
        \includegraphics[width=\linewidth]{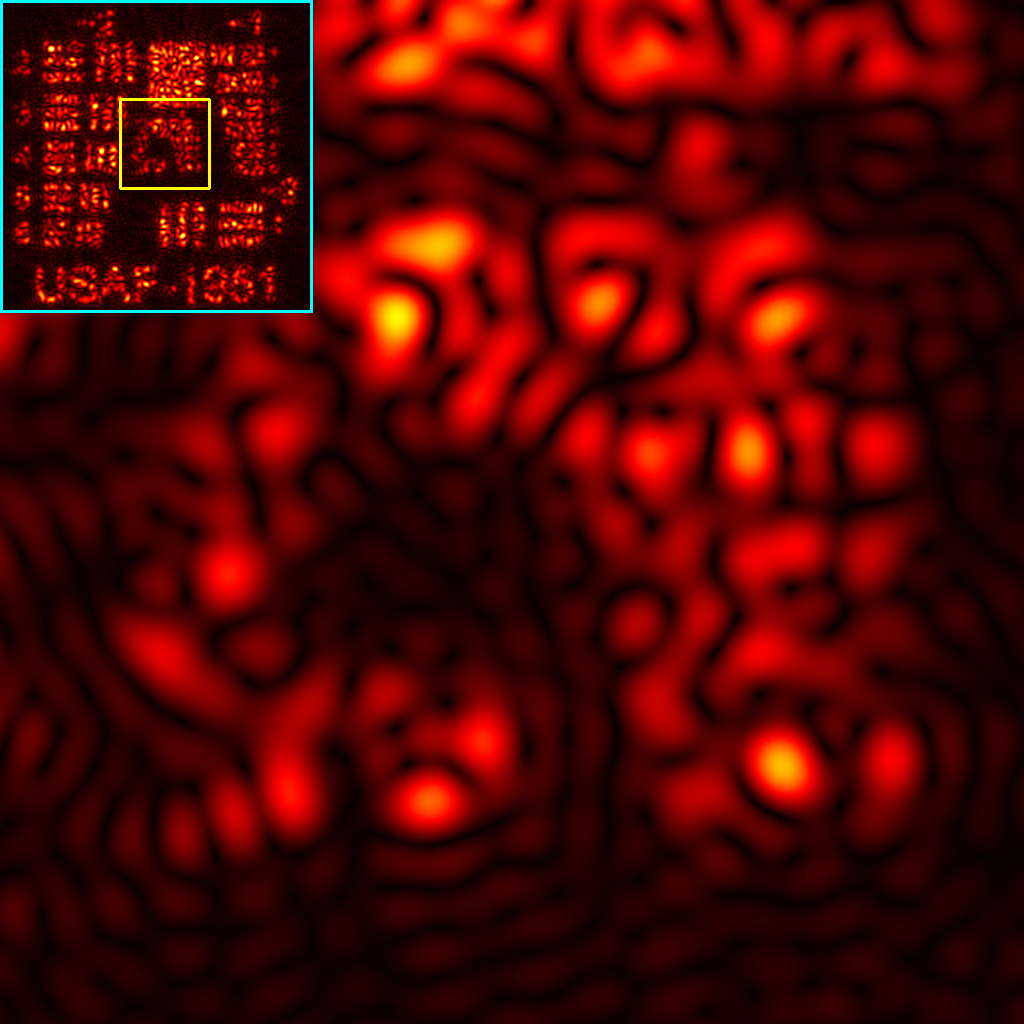}
    \end{subfigure}\hfill
    \begin{subfigure}[b]{\mylength}
        \includegraphics[width=\linewidth]{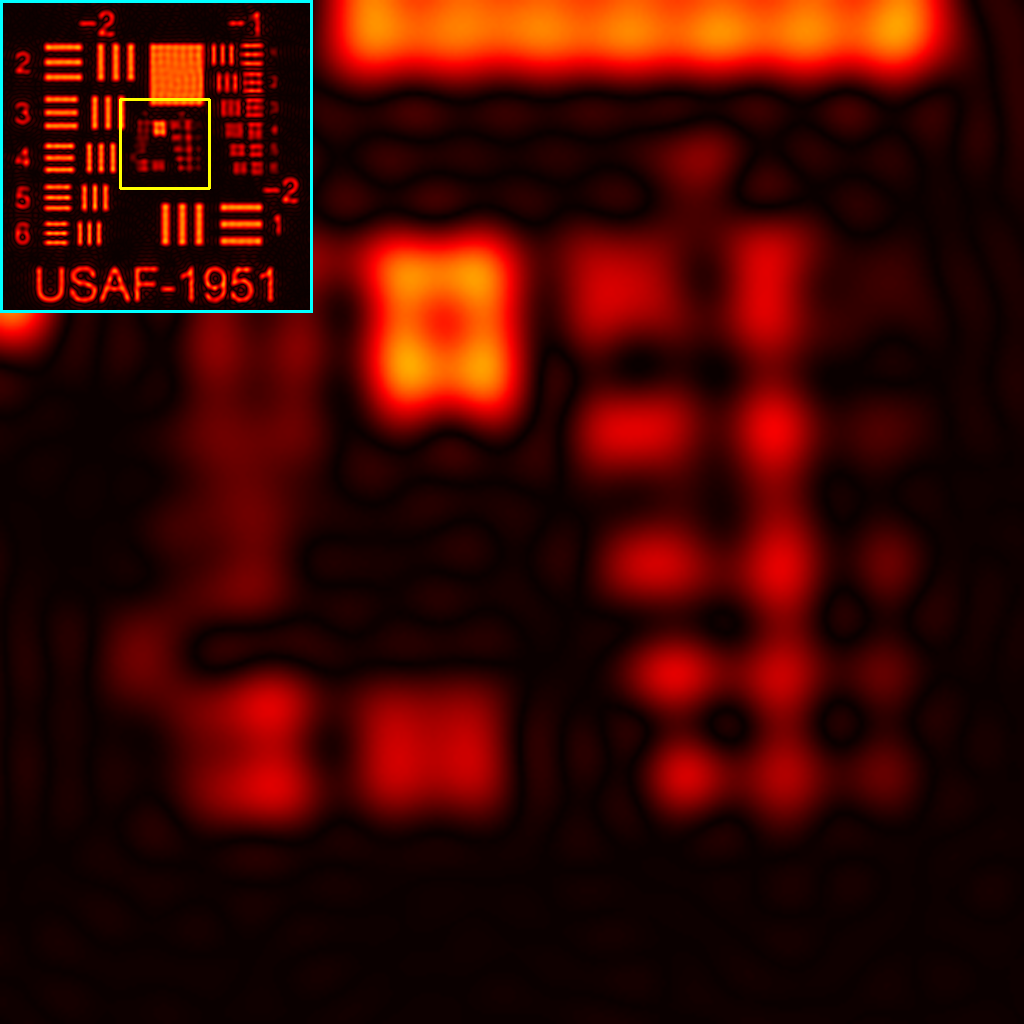}
    \end{subfigure}\hfill
    \begin{subfigure}[b]{\mylength}
        \includegraphics[width=\linewidth]{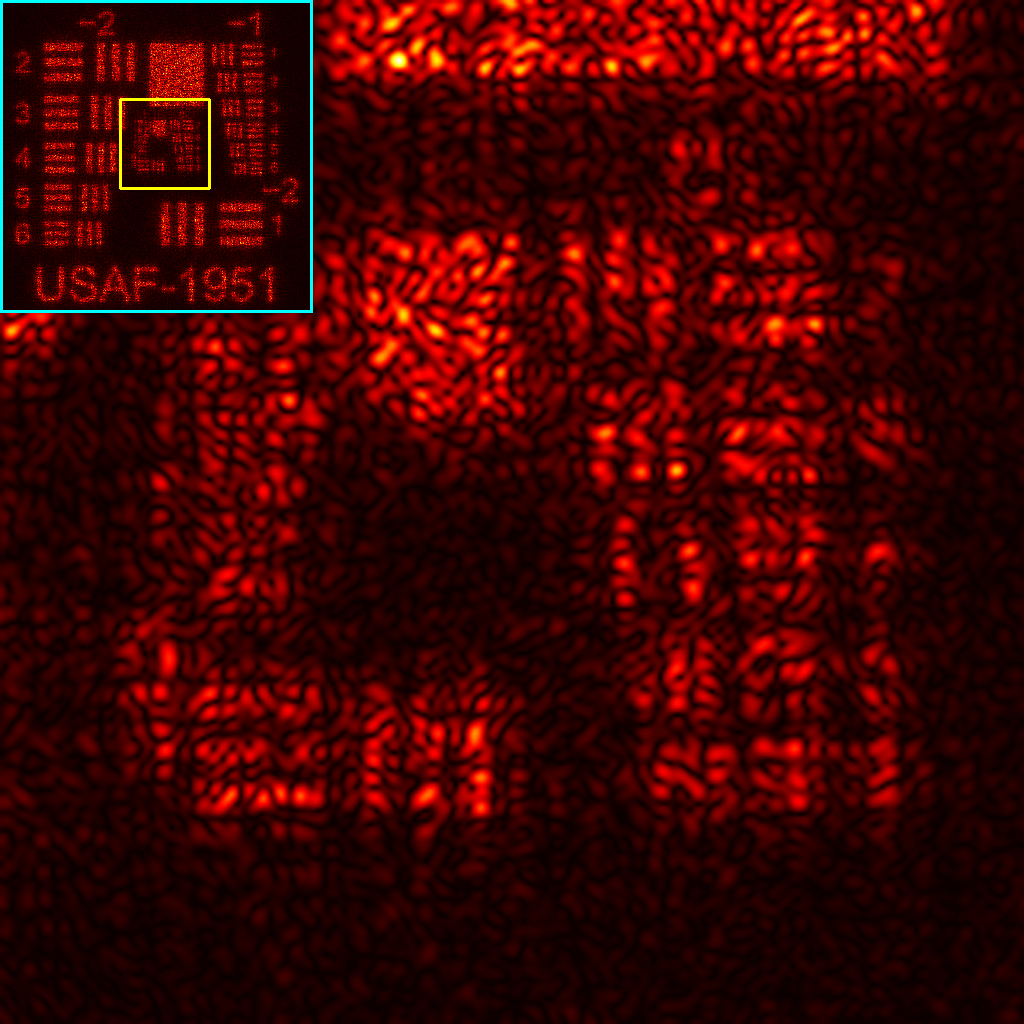}
    \end{subfigure}\hfill
    \begin{subfigure}[b]{\mylength}
        \includegraphics[width=\linewidth]{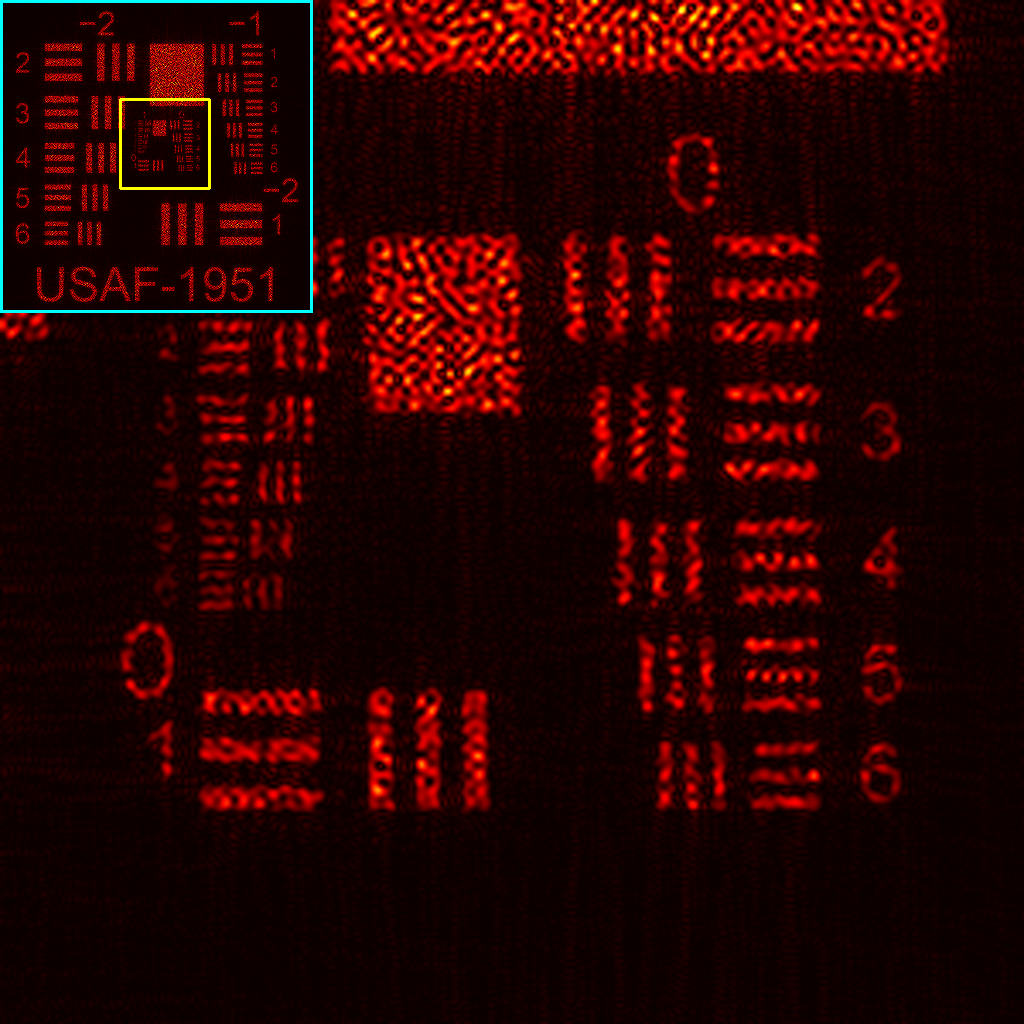}
    \end{subfigure}\hfill
    \begin{subfigure}[b]{\mylength}
        \includegraphics[width=\linewidth]{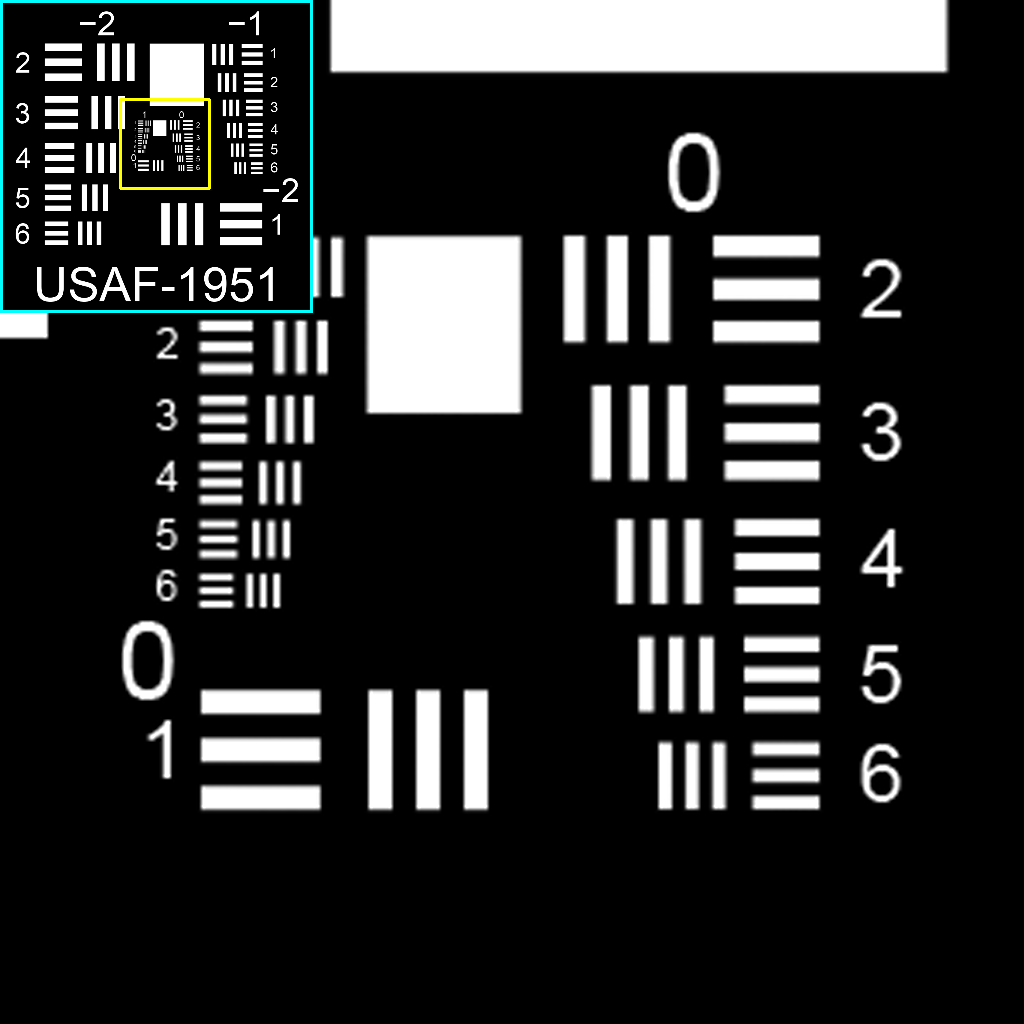}
    \end{subfigure}

    \vspace{0.4mm}

    \begin{subfigure}[b]{\mylength}
        \includegraphics[width=\linewidth]{figures/testudo/center_image_diffuse.png}
    \end{subfigure}\hfill
    \begin{subfigure}[b]{\mylength}
        \includegraphics[width=\linewidth]{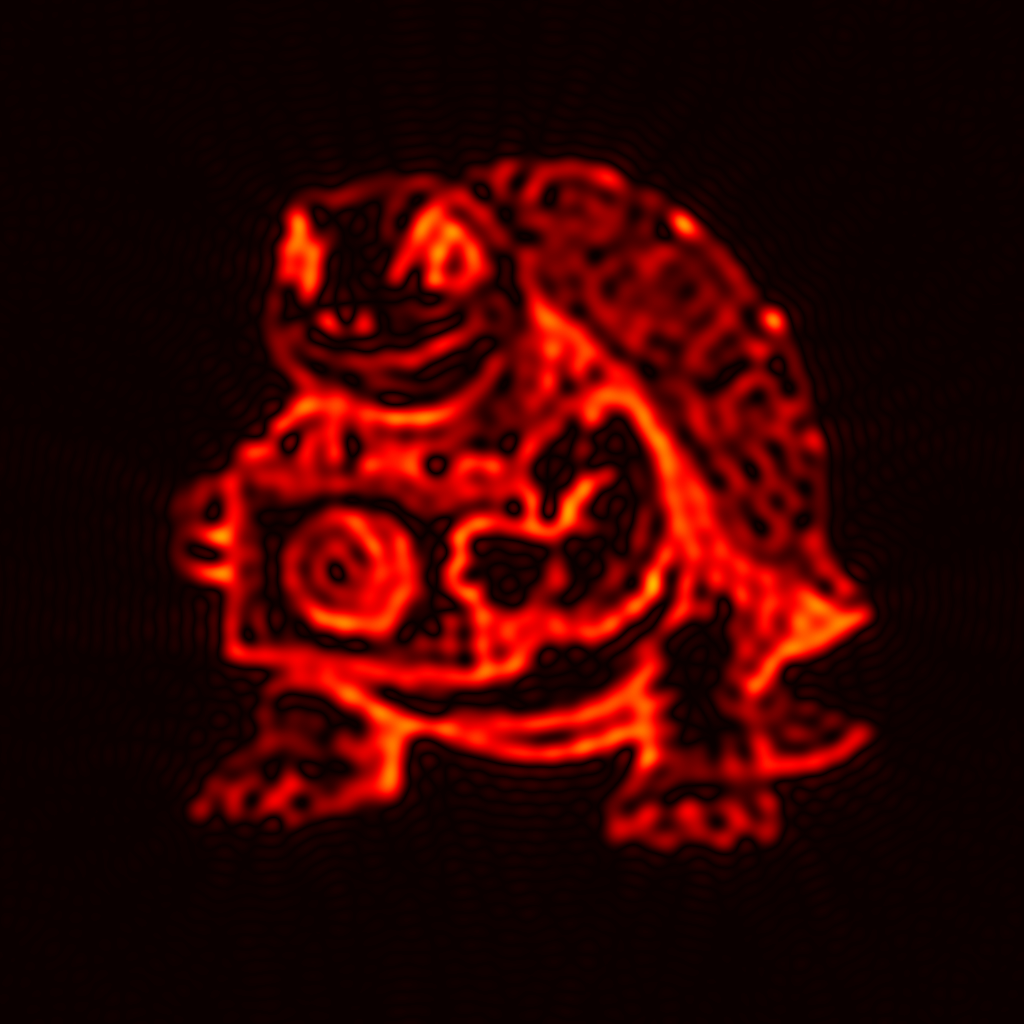}
    \end{subfigure}\hfill
    \begin{subfigure}[b]{\mylength}
        \includegraphics[width=\linewidth]{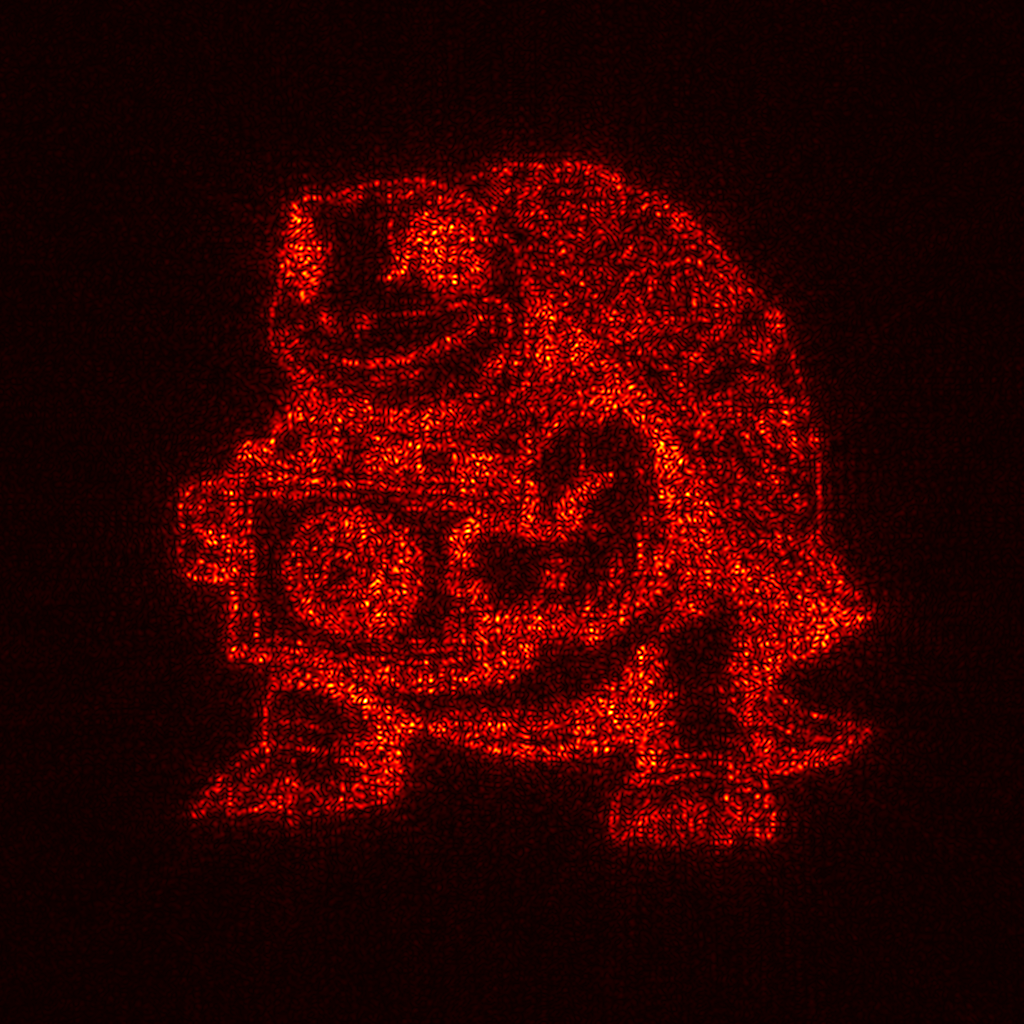}
    \end{subfigure}\hfill
    \begin{subfigure}[b]{\mylength}
        \includegraphics[width=\linewidth]{figures/testudo/equalized_image_r50.png}
    \end{subfigure}\hfill
    \begin{subfigure}[b]{\mylength}
        \includegraphics[width=\linewidth]{figures/testudo/ground_truth_image_gray.png}
    \end{subfigure}

    \vspace{0.4mm}

    \begin{subfigure}[b]{\mylength}
        \includegraphics[width=\linewidth]{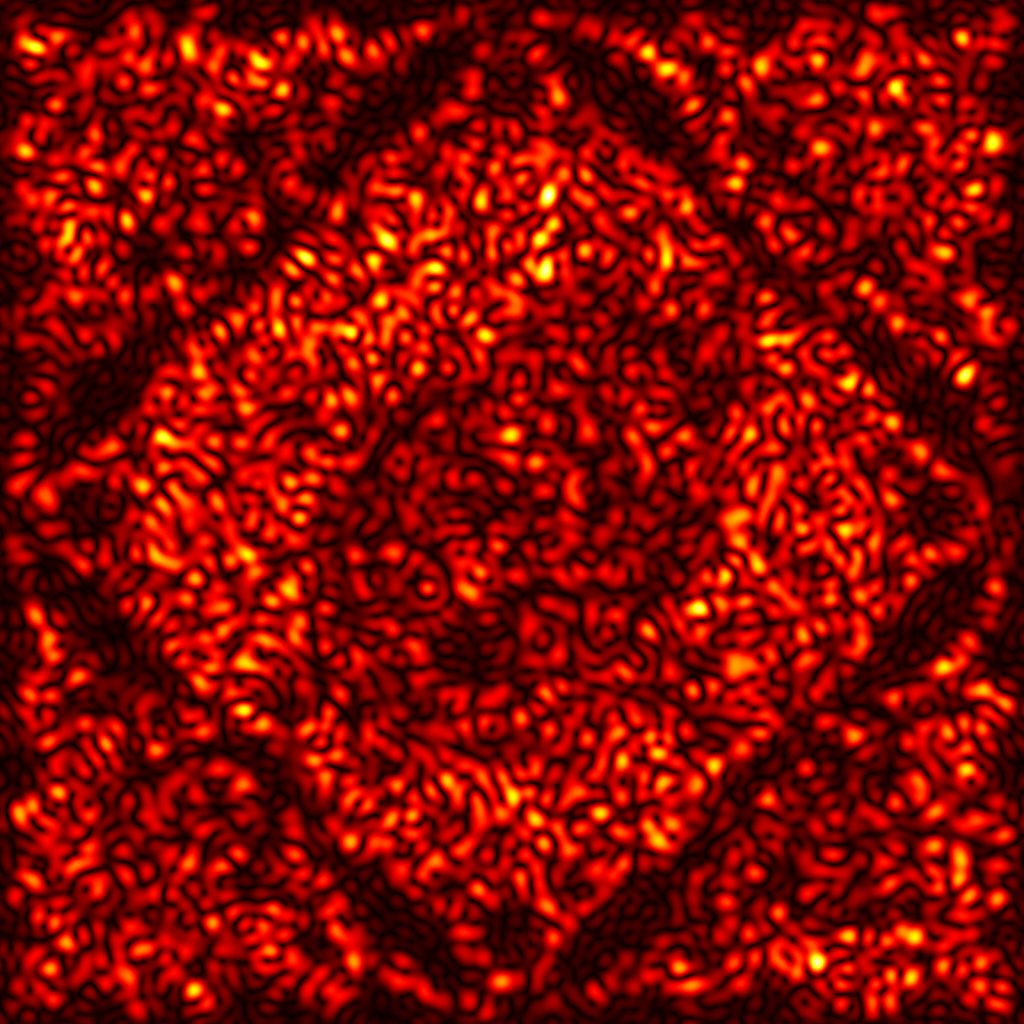}
        \caption{Single-Diffuse}
    \end{subfigure}\hfill
    \begin{subfigure}[b]{\mylength}
        \includegraphics[width=\linewidth]{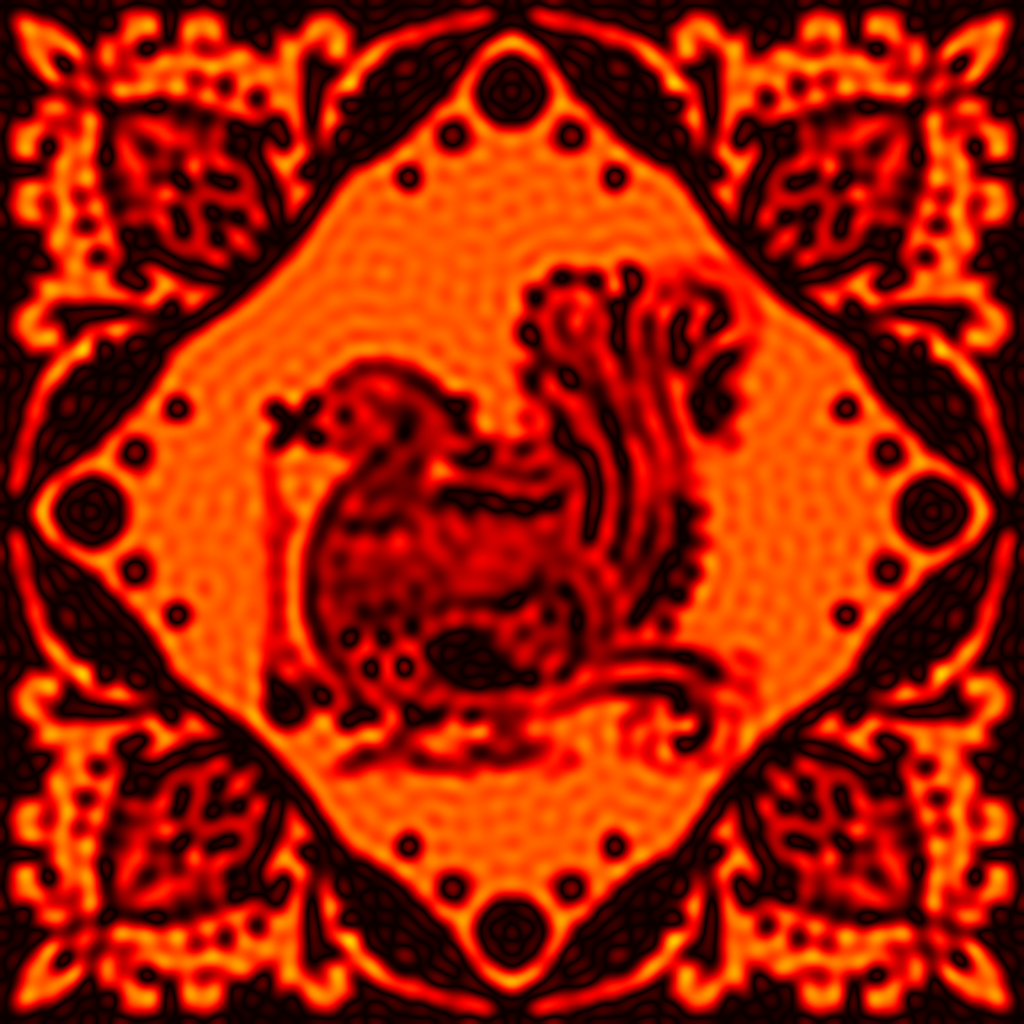}
        \caption{Single-Specular}
    \end{subfigure}\hfill
    \begin{subfigure}[b]{\mylength}
        \includegraphics[width=\linewidth]{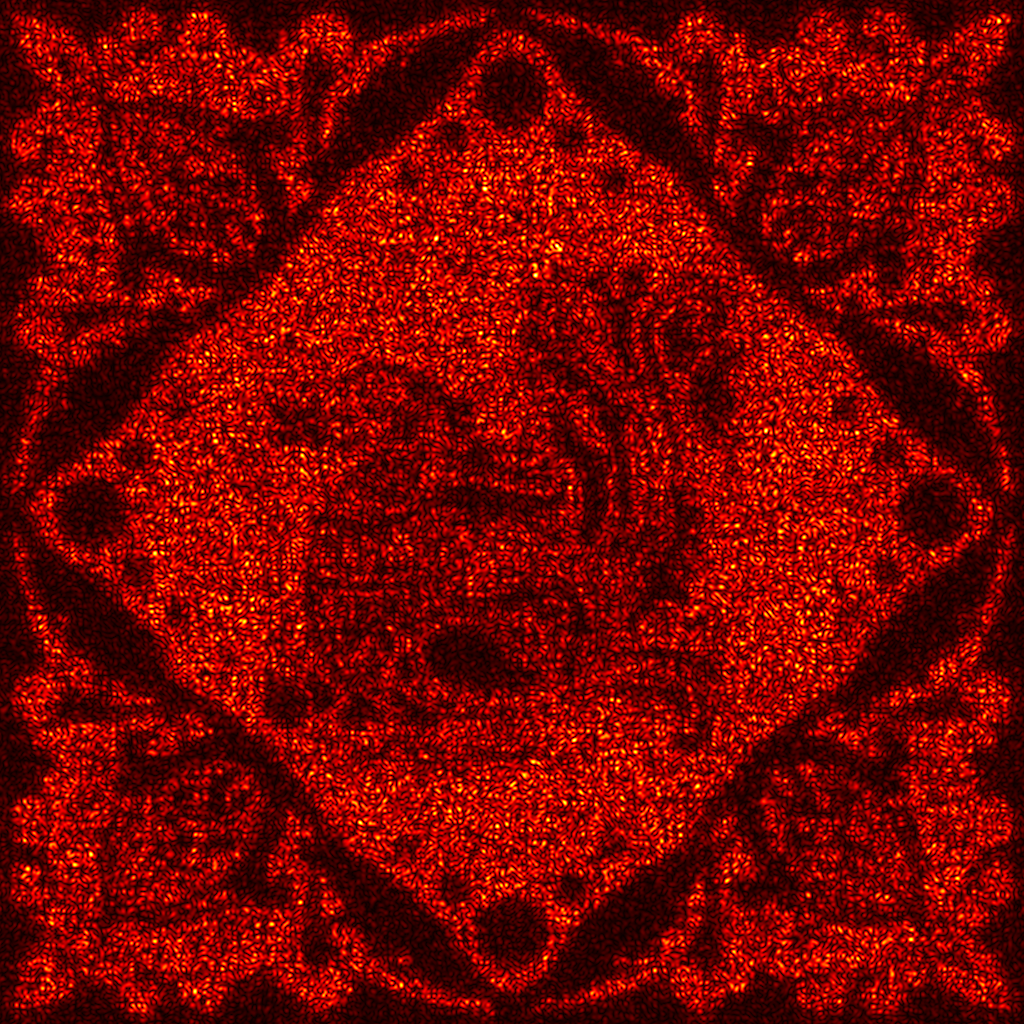}
        \caption{Single-Combined}
    \end{subfigure}\hfill
    \begin{subfigure}[b]{\mylength}
        \includegraphics[width=\linewidth]{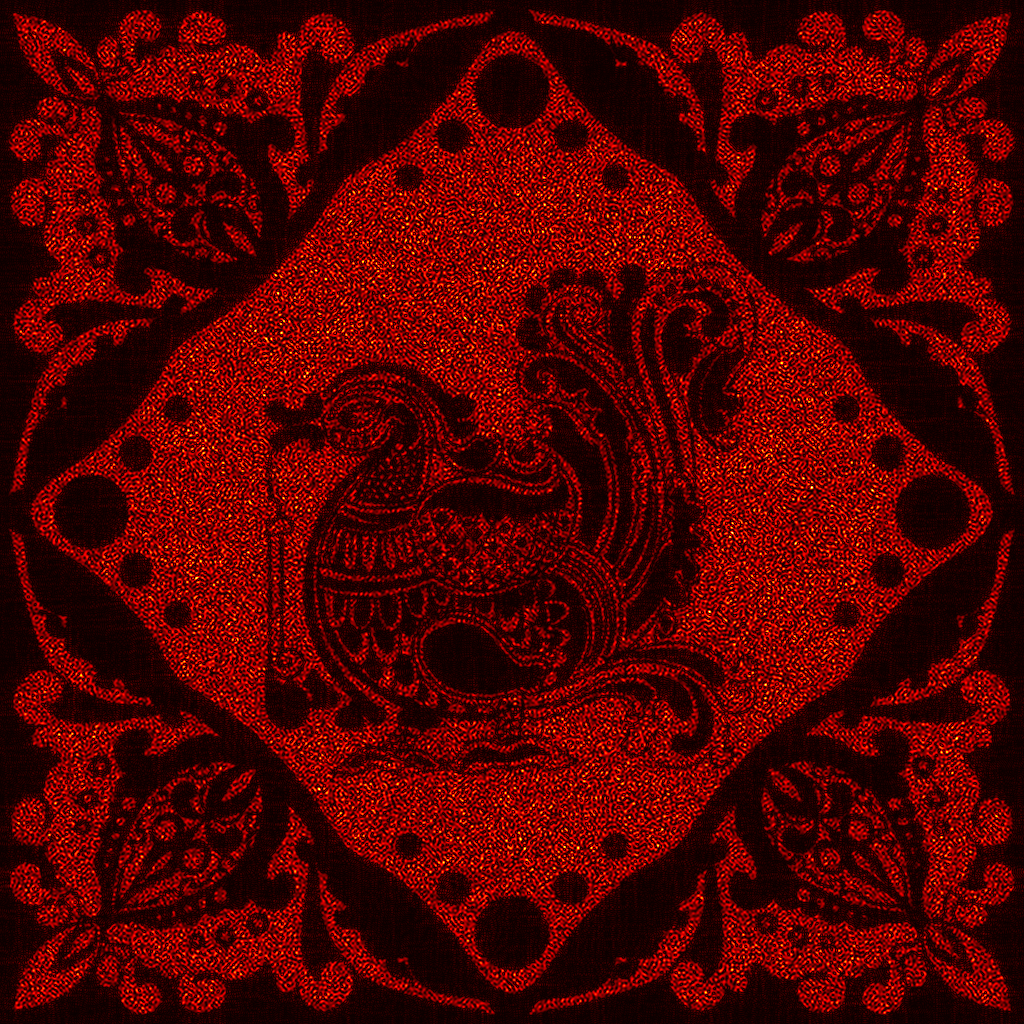}
        \caption{Proposed}
    \end{subfigure}\hfill
    \begin{subfigure}[b]{\mylength}
        \includegraphics[width=\linewidth]{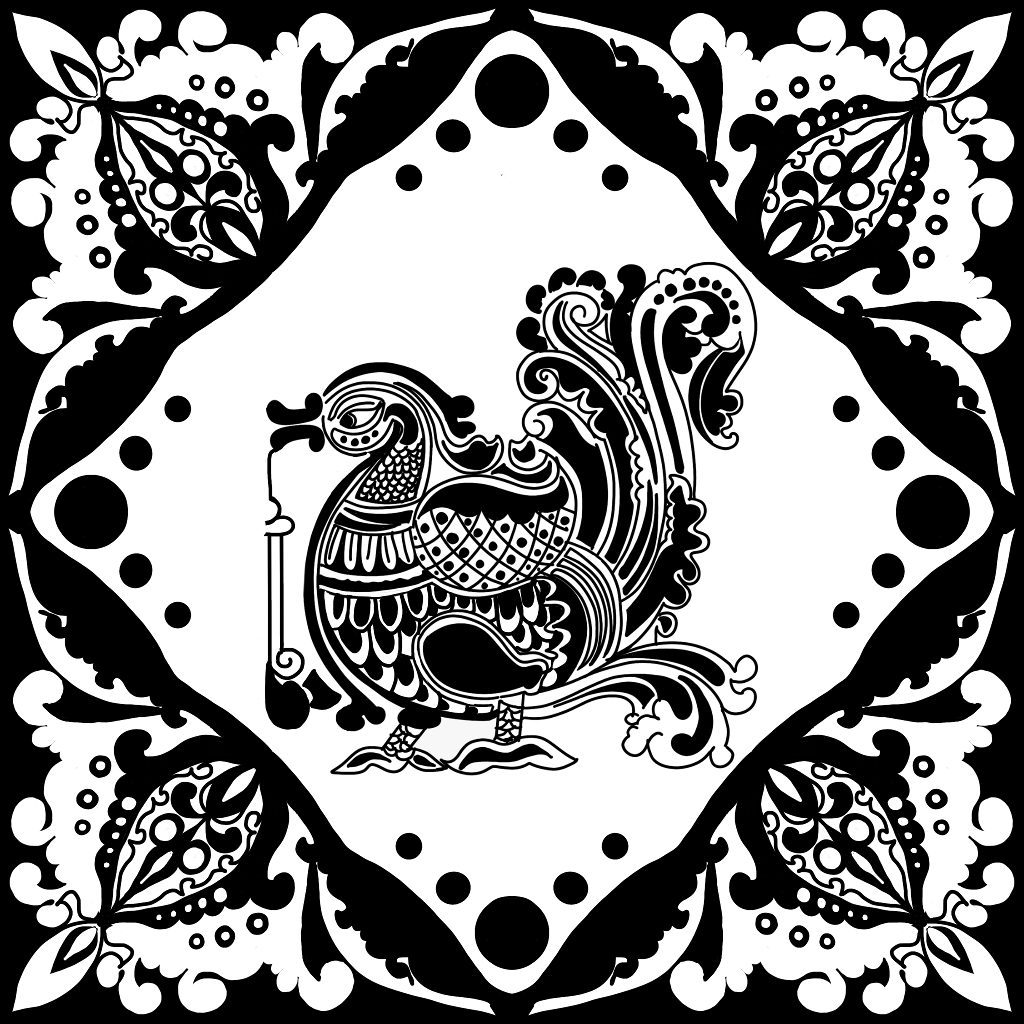}
        \caption{Ground truth}
    \end{subfigure}

    \caption{\textbf{Simulation results.} (a) images of diffuse targets captured from center aperture position; (b) images of the specular targets captured from center aperture position; (c) estimated image by combining multiple aperture images for single speckle realization; (d) estimated image using proposed method 50 speckle realizations; and (e) ground-truth image.
    Our proposed method substantially improves resolution.
    }
    \label{fig:simulations}
\end{figure*}

\begin{figure}[t]
    \centering
    \includegraphics[width=0.48\textwidth]{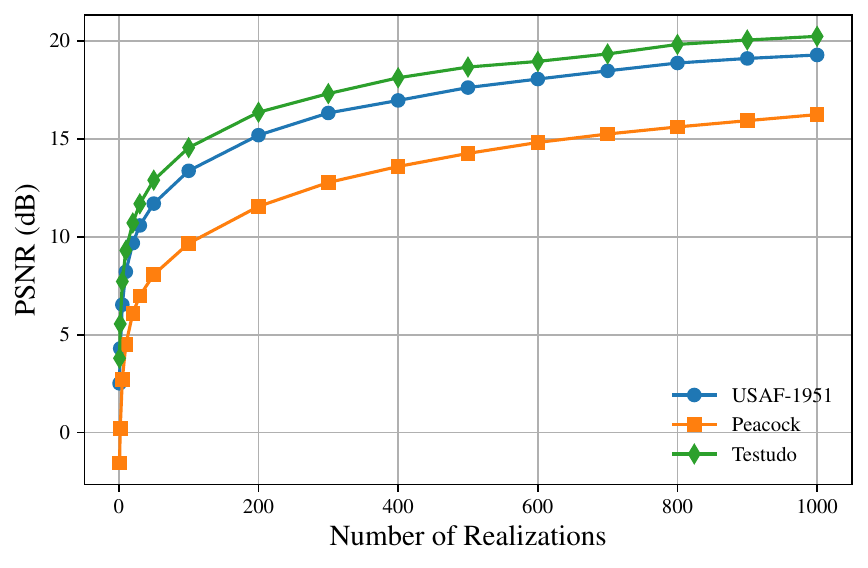}
    \caption{\textbf{PSNR (dB) vs Speckle Realizations}. 
    This shows PSNR (dB) vs number of speckle realizations for the images shown in Fig.~\ref{fig:simulations}. 
    PSNR improves monotonically with the number of speckle realizations.}
    \label{fig:psnr}
\end{figure}

\setlength{\mylength}{0.18\textwidth}  
\begin{figure*}[t]
    \centering

    \begin{subfigure}[b]{\mylength}
        \includegraphics[width=\linewidth]{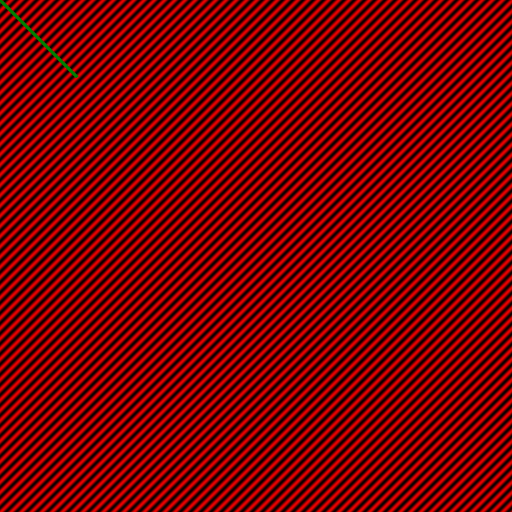}
    \end{subfigure}\hfill
    \begin{subfigure}[b]{\mylength}
        \includegraphics[width=\linewidth]{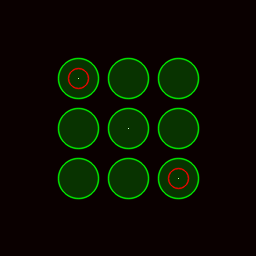}
    \end{subfigure}\hfill
    \begin{subfigure}[b]{\mylength}
        \includegraphics[width=\linewidth]{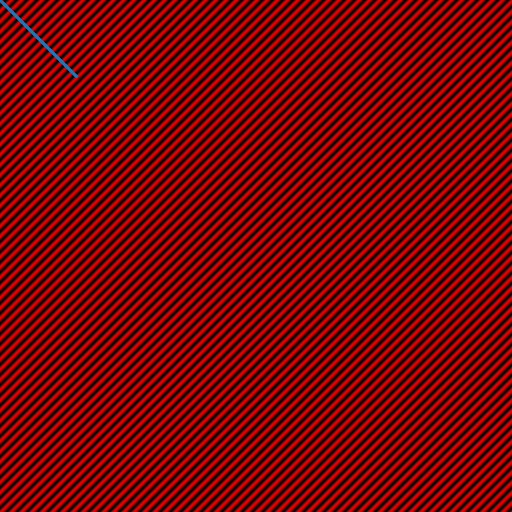}
    \end{subfigure}\hfill
    \begin{subfigure}[b]{\mylength}
        \includegraphics[width=\linewidth]{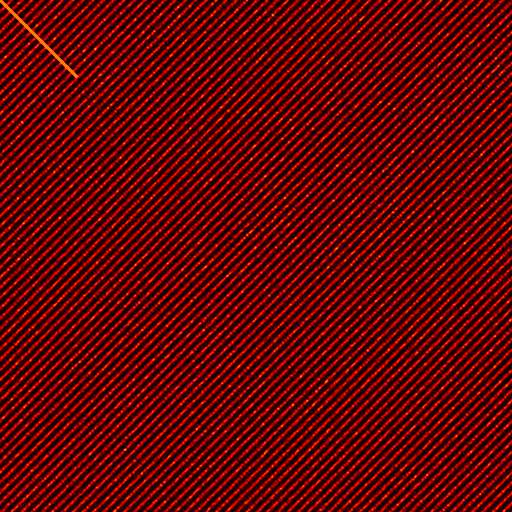}
    \end{subfigure}\hfill
    \begin{subfigure}[b]{0.267\textwidth}
        \includegraphics[width=\linewidth]{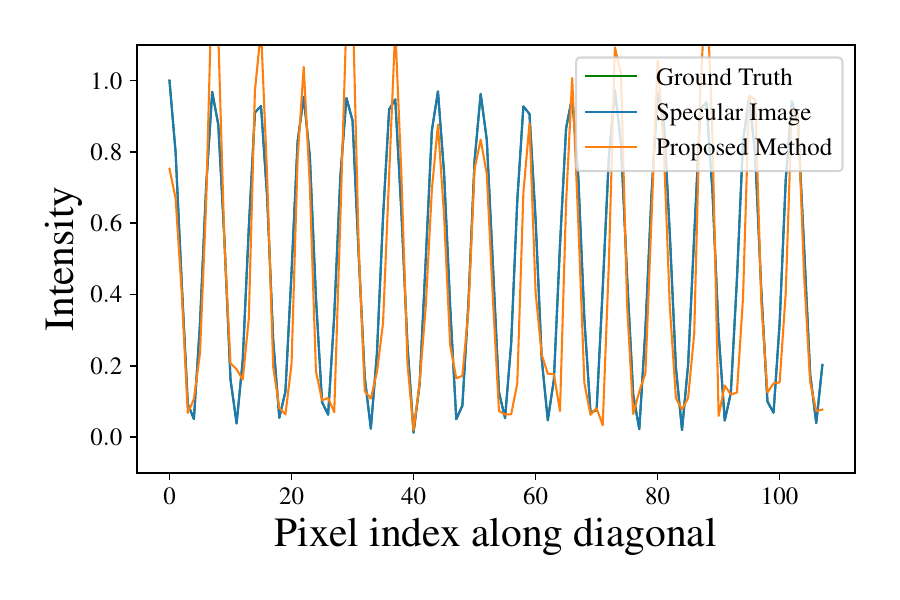}
    \end{subfigure}

    \vspace{0.5mm} 

    \begin{subfigure}[b]{\mylength}
        \includegraphics[width=\linewidth]{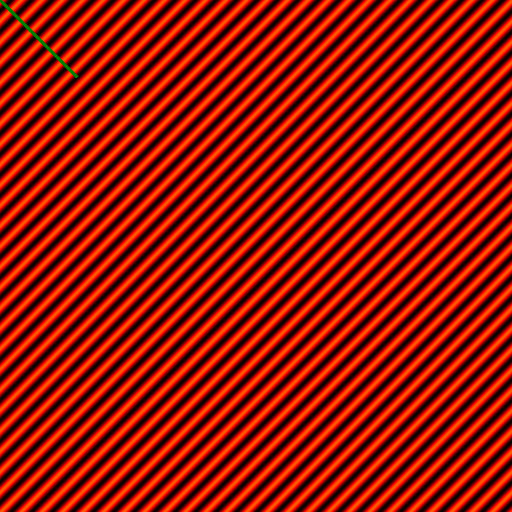}
        \caption{Scene}
    \end{subfigure}\hfill
    \begin{subfigure}[b]{\mylength}
        \includegraphics[width=\linewidth]{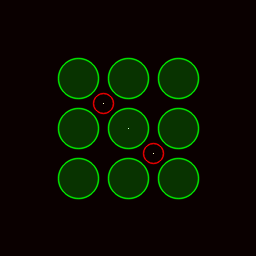}
        \caption{Spectrum}
    \end{subfigure}\hfill
    \begin{subfigure}[b]{\mylength}
        \includegraphics[width=\linewidth]{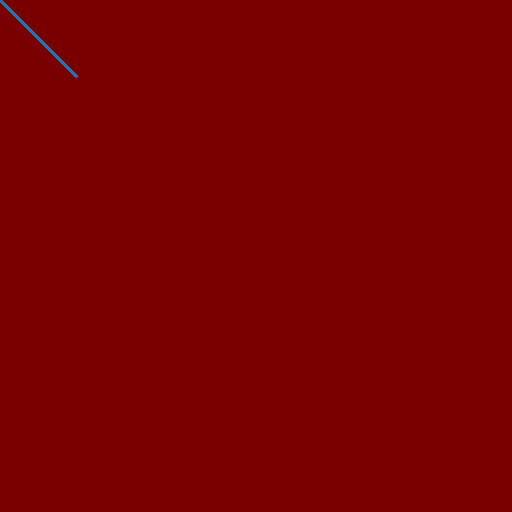}
        \caption{Full Specular}
    \end{subfigure}\hfill
    \begin{subfigure}[b]{\mylength}
        \includegraphics[width=\linewidth]{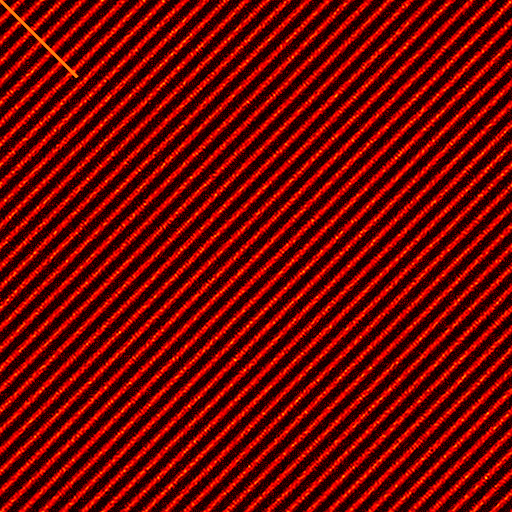}
        \caption{Proposed}
    \end{subfigure}\hfill
    \begin{subfigure}[b]{0.267\textwidth}
        \includegraphics[width=\linewidth]{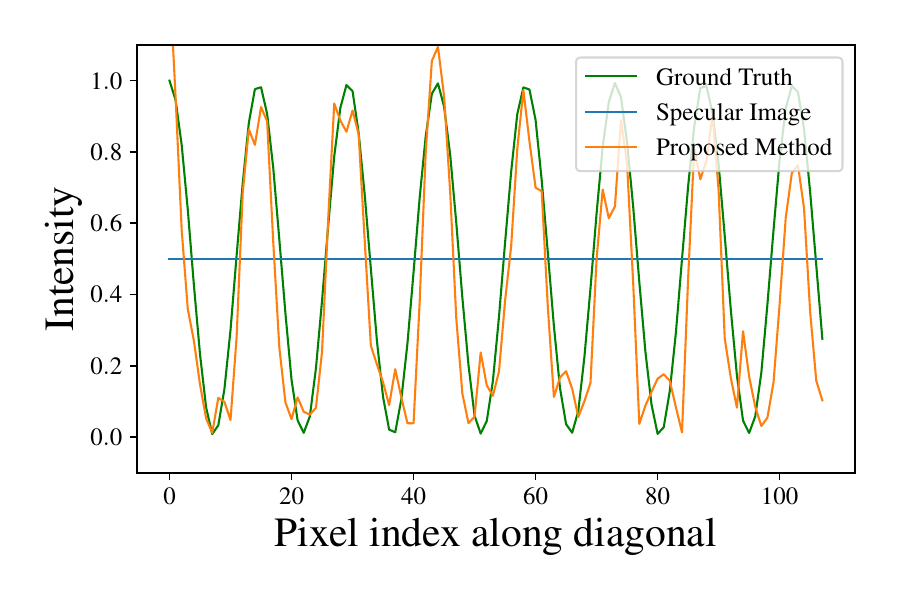}
        \caption{Line Profile}
    \end{subfigure}
    
    \caption{\textbf{Recovering spatial frequency information between subapertures by exploiting speckle}. 
    The first row corresponds to the case where the sinusoid frequency is inside the top-left subaperture and the second row corresponds to the case where the sinusoid frequency is between subapertures. 
    Column (a) illustrates the target image;
    column (b) illustrates the spectrum of the corresponding scene with subapertures superimposed in green and impulses corresponding to sinusoid highlighted in red circles, the image is zoomed in and cropped for better visualization;
    column (c) illustrates an image reconstructed by measuring the field at each subaperture and coherently combining them for a specular scene; 
    column (d) illustrates an image reconstructed using the proposed method with 500 speckle realizations for a diffuse scene;
    column (e) illustrates a segment of a line profile along the main diagonal for both specular image and image reconstructed from the proposed method with a diffuse scene. 
    Frequencies outside the array's passband are completely lost for specular scenes. By contrast, the effective aperture expansion provided by speckle averaging allows our method to recover frequency content outside the passband, without hallucination.
    }
    \label{fig:sinosoids}
\end{figure*}

\section{Simulations and Results}
\label{sec:simulations}

Fig.~\ref{fig:simulations} shows simulation results for three different scenes.
The images for diffuse scenes suffer from speckle noise, whereas the images of specular scenes do not have speckle but instead exhibit lower resolution.
Combining information from multiple subapertures increases spatial resolution; however, it does not effectively eliminate speckle noise inherent in the captured images. 
The proposed method is able to accurately recover high-resolution images of the scene even in the presence of boiling speckle.

Fig.~\ref{fig:psnr} shows the PSNR vs number of speckle realizations for the three scenes shown in~Fig.~\ref{fig:simulations}. 
The PSNR value increases as the number of speckle realizations increases.

Fig.~\ref{fig:sinosoids} demonstrates the capability of the proposed method to recover missing information between subapertures.
Two scenarios were considered: one with sinusoid frequency centered on a subaperture (top-left/bottom-right pair) and another positioned between subapertures, with corresponding scene images shown in the first column.
In both scenarios, the optical field at each subaperture was measured, coherently combined as in~\eqref{eq:composite-measurement}, and converted to intensity images. 
The baseline method assumed a specular scene and the proposed method assumed a diffuse scene, producing speckle that was mitigated by averaging multiple realizations. 
The resulting images from the baseline and proposed methods are shown in the third and fourth columns, respectively, with diagonal line profiles provided in the fifth column.

Both methods are able to capture frequency content when the frequency falls within the subapertures, as evident from the first row.
However, when the desired frequency component lies between the subapertures, the first method with a specular scene fails to capture the desired frequency content (sinusoid).
By contrast, the proposed method successfully recovers the desired frequency component (sinusoid) despite the frequency being located between subapertures. 

This example illustrates how the proposed method enables the capture of frequency components situated between subapertures---which would otherwise be missed by a sensor array with non-overlapping apertures---by leveraging speckle noise to our advantage.

\section{Discussion and Limitations}\label{sec:discussion}

The averaging of combined field intensities over different speckle realizations is the same as incoherent imaging if one has a single camera with the composite subaperture array as in ours. 
Particularly, incoherent imaging is equivalent to the application of Optical Transfer Function (OTF), which is the autocorrelation function of the aperture \cite{goodman_introduction_2017}.
For a single connected aperture, it is well known that incoherent imaging can produce a two fold resolution improvement over coherent imaging~\cite{goodman_introduction_2017}.

However, if one performs incoherent imaging with an independent array of cameras, they will produce images filtered by OTF and will have the same spectral information in all the images, as we saw in~\Cref{sec:impossibility}.
Hence, it will be impossible to get the resolution improvement that our imaging system produces because, no matter how you combine those individual images, they all capture the same low frequency spectral information.
Therefore, coherent combination of the individual optical fields at each subaperture and then doing the intensity averaging is key to our imaging system.

To our knowledge, this is the first work that demonstrates one can recover spectral information outside the passband of an SA imaging system---without hallucinations. 

Our method does have a major limitation however: It requires measuring the complex-valued optical field at each of the individual subapertures. 
While this can be done using holographic or coded-aperture--based systems~\cite{wu_wish_2019}, each introduces substantial system cost and complexity. 
Extending our method to intensity-only measurements, e.g.,~Fourier ptychography, is non-trivial.

\section{Conclusion}\label{sec:conclusion}

In this work, we have introduced new theory and algorithms to characterize and extend the fundamental limits of light-based synthetic aperture imaging. 
Specifically, we first demonstrated that the distributions of far field measurements are independent of the aperture position under per-measurement-independent speckle, and thus synthetic aperture imaging for fully diffuse targets and sequential capture with per-measurement-independent speckle is impossible. 
We then introduced a snapshot imaging framework that overcomes these limitations. 
Our framework uses a variance maximization procedure to synchronize and coherently integrate fields across subapertures. 
It then performs incoherent averaging of image-plane intensities to mitigate and exploit speckle---our system can exploit speckle to recover frequency content outside the system's pass-band.
While our present results are simulation only, we believe our framework represents an important step towards enabling macroscale light-based synthetic aperture imaging. 


%


\appendices
\renewcommand{\theequation}{\thesection.\arabic{equation}} 
\setcounter{equation}{0} 
\section{Proof of the Translation Invariance}
\label{sec:appendix-a}
In this section, we prove that the distribution of far-field measurements are invariant to the aperture translation under per-measurement-independent speckle for fully diffuse scenes.
The proof uses the convolution property of Fourier transformation and the rotational invariance of circular Gaussian distribution, which maintains that multiplying by a phase factor does not change the statistics of the circular Gaussian.\\

According to Fraunhofer diffraction theory, far-field measurements of a complex scene $x$ that contaminated by speckle noise $\eta$ captured using an aperture $A$ positioned at $\ell$\textsuperscript{th} position can be modeled as
\begin{equation}
    U_{\ell} =A_\ell \odot \F(x \odot \eta),
    \label{eq:forward-model}
\end{equation}
where $\F$ is the Fourier transformation operator and $\odot$ denotes the Hadamard product.

Using the convolution property of the Fourier transform, the inverse Fourier transform of $U_{\ell}$ can be written as 
\begin{equation}
    u_{\ell} = a_\ell \ast (x\odot \eta),
\end{equation}
where $*$ is the convolution operation. 
Since the Fourier transformation is invertible for the cases we are interested in, showing $u_{\ell}$ is translation invariant will imply $U_{\ell}$ is also translation invariant. We consider $u_{\ell}$ for brevity.
\\\\
For any fixed aperture shape $A$, the dependence of the translation $\ell$ in $A_\ell$ can be modeled as
\begin{equation}
    A_\ell[n] = A[n] \ast \delta[n - \ell],
\end{equation}
where $A$ is the same aperture located at the center. 
\\
By taking the Fourier transform of that, we can write
\begin{equation}
    a_\ell[k] = a[k] \cdot e^{j\frac{2\pi k\ell}{N}},
\end{equation}
where $a$ is the Fourier transform of $A$ and  $j$ is $\sqrt{-1}$.
\\\\
Next consider the $n$\textsuperscript{th} element of $u_{\ell}$
\begin{align}
    u_{\ell}[n]
    & = \sum_k a_\ell[k]\cdot x[n-k] \cdot \eta[n-k]\notag \\
    & = \sum_k a[k]\cdot e^{j\frac{2\pi k\ell}{N}}\cdot x[n-k] \cdot \eta[n-k]\notag \\
    & = \sum_k a[k]\cdot x[n-k] \cdot \eta[n-k] \cdot e^{j\frac{2\pi k\ell}{N}}
\end{align}
Let us define another measurement as corresponding to another position $\ell'$ as 
\begin{equation}
    u_{\ell'}[n] =  \sum_k a[k]\cdot x[n-k] \cdot \eta'[n-k] \cdot e^{j\frac{2\pi k\ell'}{N}}
\end{equation}
where, $\eta'[n-k] \sim CN(0,1).$

We know that $\eta \sim CN(0,I)$, which implies, $\eta[n-k]$ is a sample from a circularly symmetric complex gaussian, i.e., $\eta[n-k] \sim CN(0,1)$. Circularly symmetric complex Gaussian is rotation invariant. i.e., its distribution is invariant to the multiplications of unit magnitude complex numbers. As a consequence 
\begin{equation}
    \eta[n-k] \cdot e^{j\frac{2\pi k\ell}{N}} \sim CN(0,1),
\end{equation}
and
\begin{equation}
    \eta'[n-k] \cdot e^{j\frac{2\pi k\ell'}{N}} \sim CN(0,1).
\end{equation}
This means 
\begin{equation}
    p(\eta[n-k]\cdot e^{j\frac{2\pi k\ell}{N}}) = p(\eta'[n-k]\cdot e^{j\frac{2\pi k\ell'}{N}}).
\end{equation}
Therefore,
\begin{equation}
    p(u_{\ell}|A_\ell) = p(u_{\ell'}|A_{\ell'}) \quad\forall \ell,\ell'.
\end{equation}
From this it follows that
\begin{equation}
    p(U_{\ell}|A_\ell) = p(U_{\ell'}|A_{\ell'}) \quad\forall \ell,\ell'.
\end{equation}
That is, $p(U_{\ell}|A_\ell)$ is independent of the aperture location---translation invariant. 
The extension of this result to two-dimensions is straightforward.

\section{Proof of the Aperture Expansion by Speckle Averaging}
\label{sec:appendix-b}

In this section, we prove that the averaging the intensity images across speckle realizations and applying the inverse filter is equivalent to applying a band-pass filter with the expanded support of the autocorrelation of the aperture of the imaging system to the magnitude square of the specular image. 
\\
\newline
Let us analyze the average images obtained as
\begin{equation}
    \hat{I_\ell} = \E[I_\ell] = \E\left[|\Fi\left[A_\ell \odot \F(x \odot \eta)\right]|^2 \right],
\end{equation}
where the expectation is over speckle realizations $\eta$.

According to our previous proof, this expectation is independent of the aperture position $\ell$. Hence, it can be further simplified as 
\begin{equation}
    \hat{I} = \E[I_\ell] = \E\left[|\Fi\left[A \odot \F(x \odot \eta)\right]|^2 \right],
\end{equation}
for any arbitrary aperture position $\ell$.
This is equivalent to
\begin{equation}
    \hat{I} = \E\left[ |a \ast (x\odot \eta) |^2\right].
\end{equation}
Let us simplify this further by considering the $n$\textsuperscript{th} element of $\hat{I}$.

\begin{align}
\hat{I}[n]
    & = \E\left[ | a \ast (x\odot \eta)[n] |^2 \right] \notag\\
    & = \E\left[  \left(a \ast (x\odot \eta)[n]\right) \cdot\left(a \ast (x\odot \eta)[n]\right)^\ast   \right] \notag\\
    & = \E\Bigg[ \sum_{k}^{N-1}\sum_{k'}^{N-1} a[n-k]\cdot a^\ast[n-k'] \notag \\
    &\hspace{10em}\cdot x[k] \cdot \eta[k] \cdot x^\ast[k'] \cdot \eta^\ast[k']\Bigg] \notag\\
    & = \sum_{k}^{N-1}\sum_{k'}^{N-1} a[n-k]\cdot a^\ast[n-k'] \notag\\
    &\hspace{10em}\cdot x[k] \cdot x^\ast[k']  \cdot \E\left[\eta[k] \cdot \eta^\ast[k']\right] \notag\\
    & = \sum_{k}^{N-1}\sum_{k'}^{N-1} a[n-k]\cdot a^\ast[n-k'] \cdot x[k] \cdot x^\ast[k']  \cdot R_\eta[k-k'] \notag\\
    & = \sum_{k}^{N-1}\sum_{k'}^{N-1} a[n-k]\cdot a^\ast[n-k'] \cdot x[k] \cdot x^\ast[k']  \cdot \delta[k-k'] \notag\\
    & = \sum_{k}^{N-1} a[n-k]\cdot a^\ast[n-k] \cdot x[k] \cdot x^\ast[k] \notag\\
    & = \sum_{k}^{N-1} |a[n-k]|^2 \cdot |x[k]|^2 \notag\\
    & = \left(|a|^2 \ast |x|^2\right)[n]
\end{align}
Where,
\begin{equation}
    \E\left[\eta[k] \cdot \eta^\ast[k']\right] = R_{\eta}[k-k'] = \delta [k-k']
\end{equation}
Therefore, 
\begin{equation}
    \hat{I} = |a|^2 \ast |x|^2.
\end{equation}
This is equivalent to
\begin{equation}
    \hat{I} = \Fi\left[R_A \odot \F(|x|^2)\right].
\end{equation}
where $R_A$ is the auto-correlation of the aperture $A$, and it has 2x bandwidth as $A$. 

We should do inverse filtering (or equalization) to correct distortions caused by the auto-correlation filter.
\begin{equation}
    \hat{I}_{eq}= \Fi\left[R_A^{-1} \odot \F(\hat{I})\right].
\end{equation}
Then the overall effect will be
\begin{equation}
    \hat{I}_{eq} = \Fi\left[A_{eff} \odot \F(|x|^2)\right],
\end{equation}
where $A_{eff}$ (effective transfer function) is the aperture  with the expanded support of the autocorrelation of the aperture of the imaging system.

It is important to note that the imaging aperture can have any arbitrary shape.
This proof also shows that, by averaging, we can achieve two-fold resolution improvement in the presence of fully developed speckle.

\section*{Acknowledgment}

We thank Chathurya Bombuwala for creating the illustrations used in this work. 



\newpage

%

\begin{IEEEbiography}[{\includegraphics[width=1in,height=1.25in,clip,keepaspectratio]{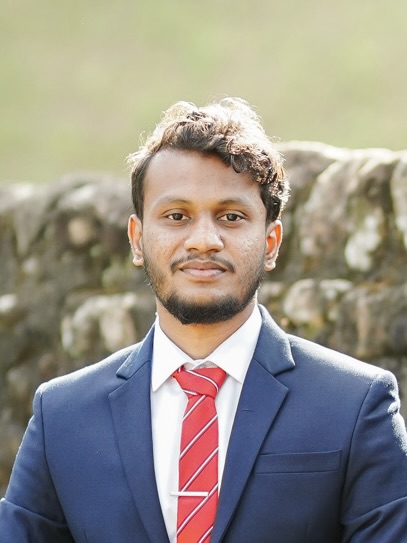}}]{Janith B. Senanayaka} is a PhD student in Electrical and Computer Engineering at the University of Maryland, College Park, USA. He was awarded the Dean's Fellowship in the first year of his studies. He received his BSc in Electrical and Electronic Engineering with First Class Honors from the University of Peradeniya, Sri Lanka. His research interests are in computational imaging, signal processing, and machine learning. He places special emphasis on solving inverse problems and developing robust reconstruction algorithms for imaging systems. Outside of his academic pursuits, he also enjoys traveling, hiking,  reading, painting, photography, music and volunteering.
\end{IEEEbiography}
\begin{IEEEbiography}[{\includegraphics[width=1in,height=1.25in,clip,keepaspectratio]{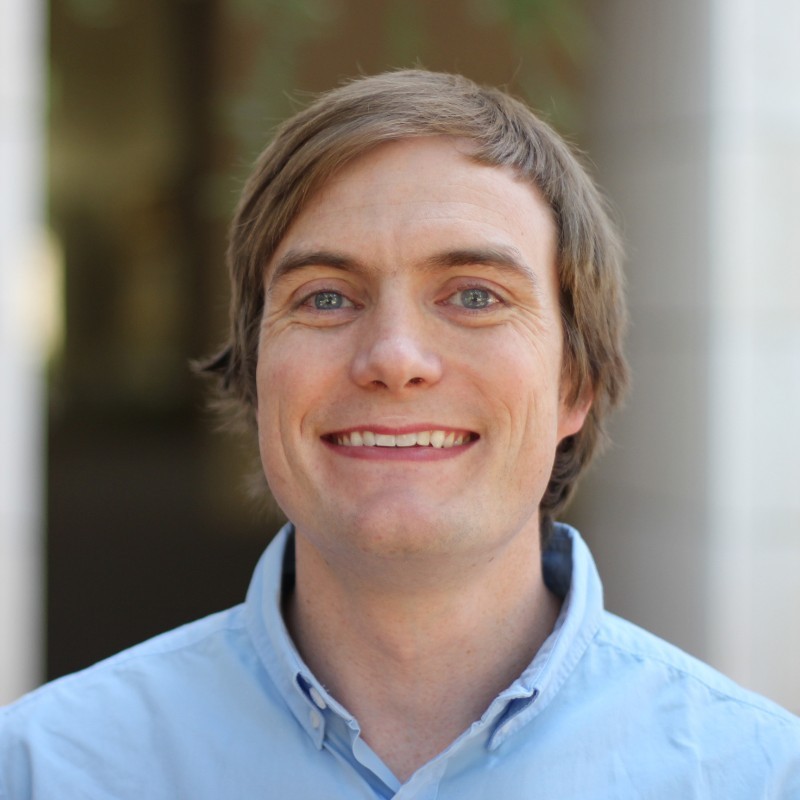}}]{Christopher A. Metzler} is an Assistant Professor in the Department of Computer Science at
the University of Maryland College Park, where he leads the UMD Intelligent Sensing Laboratory. He is a member of the University of Maryland Institute for Advanced Computer Studies (UMIACS) and has a courtesy appointment in the Electrical and Computer Engineering Department. His research develops new systems and algorithms for solving problems in computational imaging and sensing, machine learning, and wireless communications. His work has received multiple best paper awards; he recently received NSF CAREER, AFOSR Young Investigator Program, and ARO Early Career Program awards; and he was an Intelligence Community Postdoctoral Research Fellow, an NSF Graduate Research Fellow, a DoD NDSEG Fellow, and a NASA Texas Space
Grant Consortium Fellow.
\end{IEEEbiography}
\vfill




\end{document}